\newcommand\norm[1]{\| #1 \|}
\newtheorem{theorem}{Theorem}
\begin{document}

\title{Entanglement Diagnostics for Efficient Quantum Computation}
\author{Joonho Kim}
\affiliation{
 School of Natural Sciences, Institute for Advanced Study, Princeton, NJ 08540, USA.
}
 
\author{Yaron Oz}%
\affiliation{
 School of Natural Sciences, Institute for Advanced Study, Princeton, NJ 08540, USA.
}
\affiliation{
 Raymond and Beverly Sackler School of Physics and Astronomy, Tel-Aviv University, Tel-Aviv 69978, Israel.
}

\begin{abstract}

We consider information spreading measures in randomly initialized variational quantum circuits and introduce entanglement diagnostics for efficient variational quantum/classical computations. 
We establish a robust connection between entanglement measures and optimization accuracy by solving two eigensolver problems for Ising Hamiltonians with nearest-neighbor and long-range spin interactions. 
As the circuit depth affects the average entanglement of random circuit states, the entanglement diagnostics can identify a high-performing depth range for optimization tasks encoded in local Hamiltonians. 
We argue, based on an eigensolver problem for the Sachdev-Ye-Kitaev model, that entanglement alone is insufficient as a diagnostic to the approximation of volume-law entangled target states and
that a large number of circuit parameters is needed for such an optimization task.

\end{abstract}

\maketitle

\section{Introduction}

Noisy Intermediate-Scale Quantum (NISQ) technology is being developed rapidly and poses a great challenge to come up with efficient quantum algorithms \cite{nisq}, which will operate on the NISQ computers and perform better than classical algorithms. Many real-world use cases are associated with machine learning and optimization, for which variational quantum circuits offer an appropriate framework. 
The typical optimization tasks can be formulated as a search for the ground state of a 
Hamiltonian $H$, which may encode an exact combinatorial problem \cite{QAOA,Lucas_2014}.

The variational quantum algorithms (VQA) consist of two elements \cite{Peruzzo_2014}. The first part is quantum, where one constructs a parameterized quantum circuit composed of $L$ unitary layers on the product state of $n$ qubits, $|0\rangle^{\otimes n}$. 
The layer unitaries and quantum gates therein depend on continuous parameters,  each initialized with the uniform measure on $[0, 2\pi)$. Denoting all the circuit parameters collectively by $\theta$, 
the variational state is written as 
\begin{equation}
|\psi_c(\theta)\rangle = U(\theta)|0\rangle^{\otimes n} \ .
\end{equation}
The second part of the  variational quantum algorithm is classical, where we estimate the Hamiltonian expectation value with the variational circuit state, i.e.,
\begin{equation}
E(\theta) = \langle \psi_c(\theta)| H  |\psi_c(\theta) \rangle \ ,
\label{E}
\end{equation}
and minimize it in the $nL$-dimensional parameter space using the gradient descent method.

\begin{figure}[t]
\centering
\subfloat[]{
        \centering
        \label{fig:circuit_diagram_left}
        \includegraphics[height=1.85cm]{./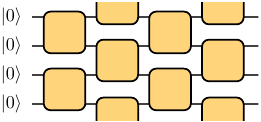}}
\hfill
\subfloat[]{
        \centering
        \label{fig:circuit_diagram_right}
        \includegraphics[height=1.85cm]{./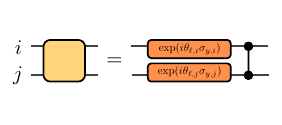}} 
\caption{The circuit architecture used in Sections~\ref{sec:randomcircuit}~and~\ref{sec:opt}. (a) The horizontal axis can be interpreted as the discrete time $L$. We call the commuting set of simultaneous 2-qubit gates as the circuit layer. (b) Each gate consists of the single-qubit Pauli-$y$ rotations \eqref{eq:paulirot} followed by the CZ operation \eqref{eq:cz}.}
\label{fig:circuit_diagram}
\end{figure}

Entanglement encodes information in the qubit correlations, which are generated by the successive application of the circuit layers. 
Given two complementary systems $A/B$, the Renyi-$k$ entropy of the reduced density matrix
\begin{equation}
{\cal R}^{k}_A = \frac{1}{1-k} \log \text{Tr} \left(\rho^k_A\right)
\label{Re}
\end{equation} 
measures their entanglement, so does the von Neumann entropy that corresponds to \eqref{Re} in the special limit $k\rightarrow 1$:
\begin{equation}
S_{EE} = - \text{Tr}\, \rho_A \log \rho_A  \ .
\label{SEE}
\end{equation}
The reduced density matrix $\rho_{A}$ 
is obtained from the full circuit density matrix $\rho_c(\theta) = |\psi_c(\theta)\rangle\langle \psi_c(\theta)|$ by taking a partial trace over the subsystem $B$.

The performance of the variational quantum algorithm depends largely on whether the quantum circuit can prepare an initial variational state $|\psi_c(\theta) \rangle$ that is close to the target ground state $|\psi_g\rangle$ of the Hamiltonian.
In this paper we argue that the average entanglement entropy \eqref{Re} or \eqref{SEE} of random circuit states provides
a distance measure that can quantify a successful minimization of the energy function. Note that, for their computation, 
we specifically use the equal partition $n_A=n_B=n/2$ and the binary logarithm.

\begin{figure}[t]
\centering
\includegraphics[width=6cm]{./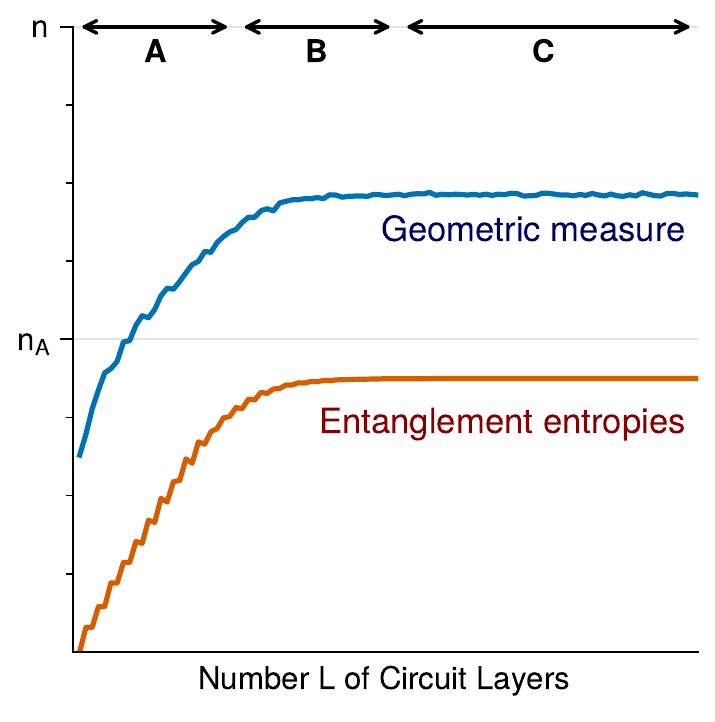}
\caption{
A schematic plot for the growth of mean entanglement entropies and geometric distance of circuit-generated states. We often distinguish the low, intermediate, high-depth circuits in Sections~\ref{sec:randomcircuit}~and~\ref{sec:opt} and denote their corresponding depth ranges by $A$, $B$, and $C$, respectively.
}
\label{fig:random_growth}
\end{figure}

The evolution of the entanglement entropies as a function of the circuit depth $L$ is schematically drawn in Figure~\ref{fig:random_growth}. It is convenient to divide the range of $L$ into three regions $A$, $B$, and $C$.\footnote{Such usage of $A$, $B$, $C$ should be distinguishable from the other usage of $A$, $B$ that denotes a subset of $n$ qubits.} $A$ is where the entanglement entropy continues to grow, while $C$ is where it has saturated to a constant value. As for their scaling behavior in $n$, the random circuit states in $A$/$C$ obey the area/volume law scaling of the entanglement entropies, respectively. Since the ground states of gapped local Hamiltonians  are expected to have an area law entanglement entropy, we expect that an initial variational state $|\psi_c(\theta) \rangle$ in $A$ would lead to efficient VQA optimization in contrast to those circuits in $C$. 
We also identify $B$ as a transition region between $A$ and $C$, where the entanglement entropy has already reached saturation yet the initial random parameter can determine the success/failure of the VQA optimization.

The technical reason why the circuit optimization fails in region $C$ is the vanishing gradient problem. When the circuit distribution is approximately 2-design, such that the first and second moments are indistinguishable from those of the Haar distribution, the energy gradient at initial random values cannot deviate from zero, except for an exponentially decaying probability in $n$ \cite{Mc,cost-dep-bp,entanglement-bp, holmes2021connecting}. It happens for the circuit ensemble in $B/C$, where the Renyi-$2$ entropy as a diagnostic of the quantum 2-design is closest to $n_A=n/2$ that corresponds to the Haar ensemble.

Until now, we assumed that the entanglement entropy of the target state follows the area law scaling, as in gapped local one-dimensional systems \cite{Hastings_2007}. However, it does not always hold, and the variational circuit in $A$ cannot minimize the circuit energy \eqref{E} to the ground level. For the Sachdev-Ye-Kitaev model \cite{SYK,2015Kitaev,Maldacena_2016}, whose ground state exhibits a volume law entanglement \cite{Huang_2019},  the optimization does fail no matter to which of $A/B/C$ the variational circuit belongs. Incidentally, we argue that higher-dimensional parameter space can assist the circuit optimization even at high level of circuit state entanglement, so that over-parameterized circuits can offer a high precision approximation of volume law entangled target states including the SYK ground state \cite{highdepth}.

The rest of this paper is organized as follows: Section~\ref{sec:ent} motivates the entanglement diagnostics as the initialization condition to arrange variational states close to the target. Section~\ref{sec:randomcircuit} studies the average entanglement growth of circuit states as a function of the circuit depth. Section~\ref{sec:opt} examines the importance of the entanglement diagnostics in the local gradient search of optimal circuit parameters. Section~\ref{sec:other} checks the validity of the entanglement diagnostics by testing them against different circuit architectures and also discuss the impact of shrinking the circuit parameter dimension. The paper concludes with discussion and outlook in Section~\ref{sec:dis}.
Additional details are given in the appendices.

\section{Entanglement Diagnostics}
\label{sec:ent}

Using the density matrix of the quantum circuit $\rho_c(\theta)$, the expectation value of the Hamiltonian (\ref{E}) reads:
\begin{equation}
E(\theta)  = \text{Tr} \left(\rho_c(\theta)  H\right) . \label{E1}
\end{equation}
Our optimization task is to get as close as possible to the ground state of the Hamiltonian by minimizing \eqref{E1}. 
It can be achieved by multiple iterations of evaluating the density matrix $\rho_{c}(\theta)$ and updating the parameters via the gradient descent \eqref{eq:gd} that will finally stop at $\theta = \theta_f$. 
We would like to reach the final parameter $\theta_{f}$ such that 
\begin{equation}
\label{eq:error}
\Delta E \equiv \text{Tr} \left((\rho_{c}(\theta_{f}) - \rho_{g})H \right) \simeq 0
\end{equation}
where $\rho_{g}$ is the exact ground state of the Hamiltonian.
A simple upper bound of this approximation error $\Delta E$ follows from the Cauchy-Schwarz inequality,
\begin{equation}
\text{Tr} \left((\rho_{c}(\theta_{f}) - \rho_{g})H \right) \leq
{\norm{\rho_{c}(\theta_{f}) - \rho_{g}}}_1 \cdot {\norm{H}}_1,
\label{BH}
\end{equation}
where the trace norm  $\norm{O}_1$ is the sum of singular values of an operator $O$, i.e., eigenvalues of $(O^{\dagger}O)^{1/2}$. 

A natural condition for efficient reduction of $\Delta E$ is arranging an initial circuit state $\rho_{c}(\theta_{in})$ to 
be in the proximity of the ground state with a small enough trace distance ${\norm{\rho_{c}(\theta_{in}) - \rho_{g}}}_1$. However, we will confront two issues.
First, we generally do not know the ground state, thus being unable to estimate the trace distance ${\norm{\rho_{c}(\theta_{in}) - \rho_{g}}}_1$. 
Second, the trace distance can be very sensitive to tiny changes of quantum states. So the above condition is often over-restrictive, discarding most reasonable initial states.

Instead, we want to relax the condition by using the entanglement entropy of an initial circuit state as a distance proxy between $\rho_c(\theta_{in})$ and $\rho_{g}$, from which one can expect the success/failure of circuit optimization.  It can be motivated as follows: The inequalities on the von Neumann and Renyi-$k$ entropy differences \cite{Fannes1973, raggioa1995properties,Audenaert_2007,Chen2016}:
\begin{align}
|S_{EE}(\rho_A) - S_{EE}(\sigma_A)|
\leq &
\ n\norm{\rho_A - \sigma_A}_1  + (e\ln{2})^{-1},\\
|\mathcal{R}^k(\rho_A) - \mathcal{R}^k(\sigma_A)|
\leq &
\ k\cdot  2^{n_A(k-1)}\norm{\rho_A - \sigma_A}_1,
\end{align}
show that, for given two quantum states $\rho_A$ and $\sigma_A$, being close in their quantum entropies is necessary for being close in their trace distance $\norm{\rho_A - \sigma_A}_1$. Suppose now that $\rho_A$ and $\sigma_A$ are the density matrices
of a subsystem $A$, which can be combined with a complementary part $B$ to constitute the entire $n$ qubit system, i.e., $\rho_A \equiv \text{Tr}_B \left(\rho\right)$ and $\sigma_A \equiv \text{Tr}_B \left(\sigma\right)$. Monotonicity of the trace distance under the partial trace, 
\begin{equation}
 {\norm{\rho_{A} - \sigma_{A}}}_1 \leq     {\norm{\rho - \sigma}}_1 \ ,
 \label{B1}
\end{equation}
implies in turn that, for the trace distance between two quantum states $\rho$ and $\sigma$ to be small, the difference in their entanglement entropies should necessarily be small. Hence, the entanglement diagnostics of initial circuit states can be considered as a weaker version of the proximity measure.

We usually cannot estimate the trace distance from the exact ground state $\rho_g$ due to our ignorance of $\rho_g$. However, we expect that the ground states of gapped local Hamiltonians are far from typical quantum states $\sigma$, whose $\sigma_A$ are approximately maximally mixed \cite{Goldstein_2006}. Thus, we require the trace distance between the equiprobable state and the reduced density matrix of the circuit state $\rho_A$ to be large. This requires non-maximal entanglement entropies of the circuit states, i.e. they should not scale with the subsystem size $n_A$ \cite{Page_1993}. This is encoded in the following:

\begin{theorem}

\noindent
(i) The trace distance between the reduced density matrix $\rho_A$ 
and the maximally mixed state ${2^{-n_A}}\,I_{n_A}$ satisfies the following inequality:
\begin{align}
 \mathcal{L}(n_A, \rho_A) \leq \frac{1}{2}\left\Vert\rho_A - \frac{I_A}{2^{n_A}}\right\Vert_1  \leq 
 \mathcal{U}(n_A, \rho_A) 
\end{align}
with
\begin{align}
 \mathcal{U}(n_A, \rho_A)  &= \left(\frac{n_A - S_{EE}(\rho_A)}{2}\right)^{1/2}\label{B4}\\
 \mathcal{L}(n_A, \rho_A)  &= \frac{1}{2k} \left(   2^{ (1-k)\mathcal{R}^k_A(\rho_A)} - 2^{(1-k)n_A}\right)
 \label{eq:lower_bound}
\end{align}
where $S_{EE}(\rho_A)$ and $\mathcal{R}^q_A(\rho_A)$ are the von Neumann and Renyi-$k$ entropies of the reduced state $\rho_A$, respectively.
\\[0.2cm]\noindent
(ii) In the large size limit $n_A \gg 1$ of the subsystem $A$, the following lower bound holds asymptotically:
\begin{equation}
  1 - \frac{S_{EE}\left(\rho_{A}\right)}{n_A}
\lesssim \frac{1}{2}\left\Vert{\rho_A - \frac{I_A}{2^{n_A}}}\right\Vert_1 \ .
\label{eq:lower_bound2}
\end{equation}
\end{theorem}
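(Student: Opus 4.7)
My plan is to prove part (i) by applying one entropy continuity inequality in each direction, and part (ii) by the sharp Fannes--Audenaert bound. For the upper bound in (i), I would invoke the quantum Pinsker inequality $\frac{1}{2}\|\rho - \sigma\|_1 \leq \sqrt{\frac{1}{2} S(\rho\|\sigma)}$ specialized to $\sigma = I_A/2^{n_A}$. Since the quantum relative entropy against the maximally mixed state collapses to $S(\rho_A \| I_A/2^{n_A}) = n_A - S_{EE}(\rho_A)$ (up to the conventional factor of $\ln 2$ between binary and natural logarithms, which must be tracked), this directly reproduces $\mathcal{U} = \sqrt{(n_A - S_{EE})/2}$.

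For the lower bound in (i), the key algebraic observation is that $2^{(1-k)\mathcal{R}^k_A(\rho_A)} = \text{Tr}(\rho_A^k)$ and $2^{(1-k) n_A} = \text{Tr}\bigl((I_A/2^{n_A})^k\bigr)$, so $\mathcal{L}$ rewrites as $\frac{1}{2k}\bigl[\text{Tr}(\rho_A^k) - \text{Tr}(\sigma_A^k)\bigr]$ with $\sigma_A := I_A/2^{n_A}$. I would then establish the Lipschitz-type estimate $\|\rho^k - \sigma^k\|_1 \leq k\,\|\rho - \sigma\|_1$ via the telescoping identity $\rho^k - \sigma^k = \sum_{j=0}^{k-1} \rho^{j}(\rho - \sigma)\,\sigma^{k-1-j}$, H\"older's inequality applied to each summand, and the bound $\|\rho\|_\infty, \|\sigma\|_\infty \leq 1$. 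Combining this with $|\text{Tr}(X)| \leq \|X\|_1$ yields $\mathcal{L} \leq \frac{1}{2}\|\rho_A - \sigma_A\|_1$ for any positive integer $k$.

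For part (ii), I would invoke the sharp Fannes--Audenaert continuity bound $|S(\rho) - S(\sigma)| \leq T\log_2(d-1) + H_2(T)$ with $T = \frac{1}{2}\|\rho - \sigma\|_1$, Hilbert-space dimension $d$, and binary entropy $H_2$. Specializing to $\sigma = I_A/2^{n_A}$ (so $S(\sigma) = n_A$) gives $n_A - S_{EE}(\rho_A) \leq T\,n_A + 1$, since $H_2(T) \leq 1$; dividing by $n_A$ and letting $n_A \to \infty$ produces $1 - S_{EE}/n_A \leq T + O(1/n_A)$, which is precisely the claimed asymptotic lower bound. The main obstacle I foresee is extending the Lipschitz step $\|\rho^k - \sigma^k\|_1 \leq k\|\rho - \sigma\|_1$ beyond positive-integer $k$: the telescoping identity no longer applies, so for non-integer R\'enyi indices one would instead need an integral representation of the operator power or a L\"owner--Heinz-type argument. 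Secondary bookkeeping of the $\ln 2$ factor in $\mathcal{U}$ and selecting the Audenaert rather than the looser Fannes form quoted earlier for (ii) are routine but should be spelled out to pin the constants down exactly as stated.
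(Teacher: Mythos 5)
Your proposal follows essentially the same route as the paper's proof: Pinsker's inequality against the maximally mixed state for the upper bound, the Raggio continuity bound for Tsallis entropies for the lower bound (which you re-derive for integer $k$ via the telescoping identity, where the paper simply cites it for general $k$), and the Fannes--Audenaert estimate for part (ii). The argument is correct; the only caveat, which you already flag yourself, is that the telescoping proof of $|\text{Tr}\,\rho^k - \text{Tr}\,\sigma^k| \le k\,\|\rho-\sigma\|_1$ covers only integer $k$, while the cited reference gives it for all $k\ge 1$.
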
 

\begin{proof}
\noindent
(i) 
We start from the Pinsker's inequality: 
\begin{equation}
 \tfrac{1}{2} {\norm{\rho - \sigma}}^2_1 \leq S(\rho ||\sigma)  
 \label{B0}
\end{equation}
on the trace distance between two states $\rho$ and $\sigma$ and their relative entropy. Plugging $\rho=\rho_A$ and $\sigma = 2^{-n_A}\,{I}_{A}$, 
\begin{align}
    S(\rho_A||\sigma_A) =  n_A + \text{Tr}(\rho_A \log \rho_A) = n_A - S_{EE}(\rho_A) 
\end{align}
such that (\ref{B0}) becomes (\ref{B4}).

The continuity bound of the Tsallis-$k$ entropy implies \cite{raggioa1995properties}:
\begin{equation}
|\text{Tr} \left(\rho_A^k\right) - \text{Tr} \left(\sigma_A^k\right)|\leq k \,\norm{\rho_A - \sigma_A}_1 \ ,
\label{ineqn}
\end{equation}
which can turn into \eqref{eq:lower_bound} by inserting $\sigma_A = 2^{-n_A}\,{I}_{A}$ and  $\text{Tr} \left(\rho_A^k\right) = 2^{(1-k)\mathcal{R}^k_A(\rho_A)}$. 
\\[0.2cm]\noindent
(ii)
Recall the Fannes–Audenaert inequality \cite{Audenaert_2007}:
\begin{align}
|S_{EE}(\rho_A) - S_{EE}(\sigma_A)|
\leq &
\tfrac{1}{2}\norm{\rho_A - \sigma_A}_1 \log (2^{n_A}-1) \nonumber\\
&+ H(\tfrac{1}{2}\norm{\rho_A - \sigma_A}_1) \label{eq:fannes}
\end{align}
where $H(t) \equiv -t \log{t} - (1-t)\log(1-t)$. Substituting $\sigma_A = 2^{-n_A}\,{I}_{A}$ and taking the large system size limit $n_A\gg 1$, the LHS of \eqref{eq:fannes} becomes $(n_A - S_{EE}(\rho_A))$, which leads to the asymptotic inequality \eqref{eq:lower_bound2}.
\end{proof}

We stress that the entanglement diagnostic for circuit states is only a necessary condition to keep the initial and target states close. Remarkably, as we will see in Section~\ref{sec:opt}, the gradient-based optimization indeed works efficiently for those variational circuits whose average entanglement entropy scales slower than the volume law.
Concerning the circuit depth, this suggests to avoid intermediate-depth and high-depth circuits, respectively corresponding to $B/C$ of Figure~\ref{fig:random_growth}, and favor the circuits with fewer layers that belong to $A$.
We will estimate the critical depth $L_s$ that divides $A$ and $B/C$ in the following Section~\ref{sec:randomcircuit}.

\section{Random Quantum Circuit}
\label{sec:randomcircuit}

In this section, we study the growth of entanglement entropy for the circuit states generated by a random circuit evolution of the initial product state $|0\rangle^{\otimes n}$. 
Figure~\ref{fig:circuit_diagram} is the quantum circuit architecture used in this paper. It defines a $(1+1)$-dimensional discrete quantum system, where the $n$ qubits along the vertical axis represent the space, and the $L$ layers along the horizontal axis span the time. The qubits are arranged identically with period $n$, i.e., $i\simeq i+n$, imposing a periodic boundary condition along the spatial direction. At each time step, the wavefunction evolves by a chain of the 2-qubit unitary gates, acting alternatingly on all neighboring odd-even/even-odd qubit pairs. The 2-qubit gate is made of independent Pauli-$y$ rotations acting on single qubits, 
\begin{equation}
R(\varphi) =  \exp(i  \sigma_y \varphi) =    \begin{pmatrix}
\cos\varphi & \sin\varphi \\
-\sin\varphi & \cos\varphi
\end{pmatrix} \ ,
\label{eq:paulirot}
\end{equation}
followed by the controlled-Z operation
\begin{align}
    CZ = \text{diag}\left(1, 1, 1, -1\right)
    \label{eq:cz}
\end{align}
that generically creates a pairwise entanglement. We will collectively denote all rotation angles by $\theta$ while using $\theta_{l,i}$ to indicate a specific angle that rotates the $i$'th qubit at the $l$'th layer, where $1\leq i \leq n$ and $1\leq \ell\leq L$. These variables are randomly chosen from $\mathcal{U}(0,2\pi)$, the uniform distribution  between $0$ to $2\pi$.

\begin{figure}[t]
\subfloat[Von Neumann Entropy]{
    \includegraphics[height=4.1cm]{./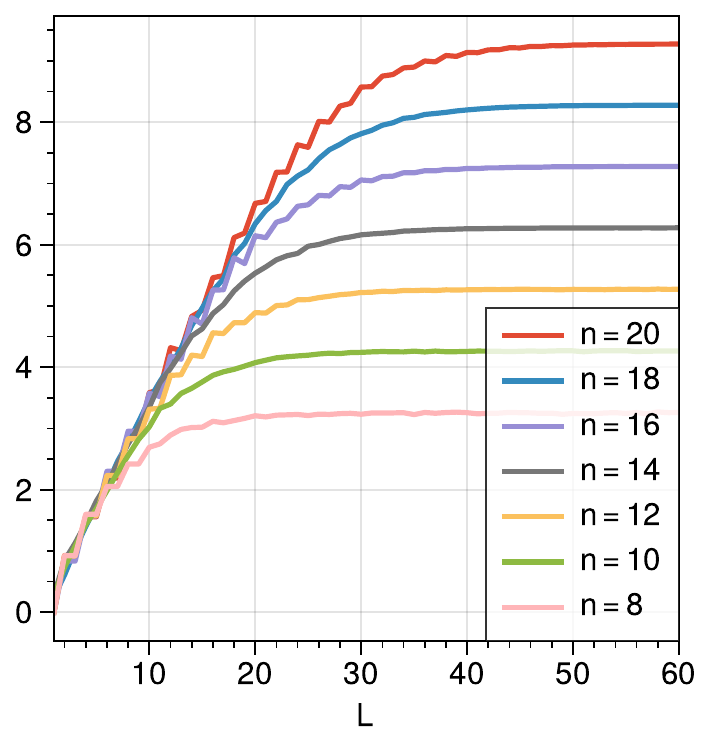}\label{fig:vn}}
\subfloat[Renyi-2 Entropy]{
    \includegraphics[height=4.1cm]{./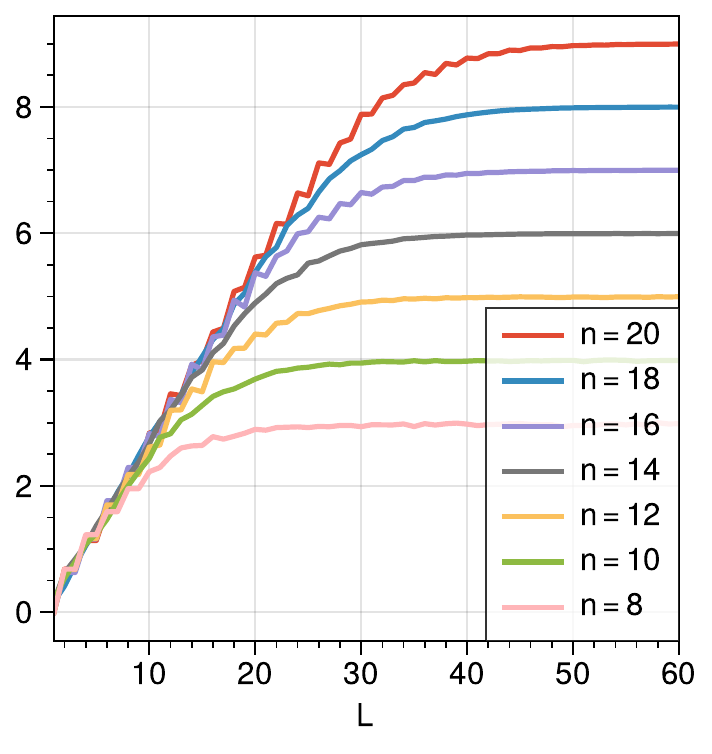}\label{fig:r2}}
\caption{Von Neumann and Renyi-2 entropies for $8 \leq n \leq 20$ averaged over 50 independent circuit states, as a function of the circuit depth $L$. }
\label{fig:entropy-growth-generic}
\end{figure}

\subsection{Linearity of the 
Initial Entanglement Growth}

Let us consider the evolution
of the $n$-qubit state under the random circuit unitaries of Figure~\ref{fig:circuit_diagram}, as a function of the number of layers $L$. 
We measure the average growth of the bipartite entanglement of random circuit states by decomposing the $n$ qubits into two equal-size subsystems, $n_A = n_B= n/2$, and calculating the sample statistics of various Renyi entropies for different $n$ and $L$.  

Figures~\ref{fig:vn}~and~\ref{fig:r2} show the von Neumann and Renyi-2 entropies averaged over $50$ random circuit states
with different numbers of qubits $n$. Figure~\ref{fig:entropies-generic} compares the Renyi entropies of different orders averaged over $50$ random circuit states with $n=12$ and $20$ qubits.
They all exhibit the linear growth of the entanglement entropies at initial times. The curves then slow down in growth and eventually reach the plateaus.
See Figures~\ref{fig:pmax}--\ref{fig:geometry-generic} in the appendix for the growth curves of several other entanglement quantities with different system sizes $n$.

The early linear growth of the entanglement entropy, 
\begin{equation}
 {\cal R}^k_{A} (\rho_A) = v_{k} L \ ,
 \label{eq:area-law}
\end{equation}
is a characteristic feature of the global quench dynamics \cite{Calabrese_2005}, which in our case is driven by the successive application of the layer unitaries $U(\theta_L)$ to the $n$-qubit product state $|0\rangle^{\otimes n}$. The coefficient $v_k$ is known as the entanglement velocity and  generally depends on $k$. 
We determine $v_k$ by the linear regression of the early-time entropies on the range of $0 \leq L \leq n/2$. The estimated values of $v_k$, computed at different $n$'s and $k$'s, are summarized in the third columns of Tables~\ref{tbl:random}~and~\ref{tbl:random-extra}. 
We find that
$v_k$ is independent of $n$ except for minor fluctuations, identifying \eqref{eq:area-law} with the area-law entanglement of the early-time circuit states.
Furthermore, $v_k$ decreases when the order $k$ of the Renyi entropy increases, i.e., $v_{k_1} > v_{k_2}$ for $k_1 < k_2$.

On the other end, at a late time, the Renyi-$k$ entropy saturates to a constant $r_{n,k}$ for any $n$ and $k$. We compute the saturated value of $\mathcal{R}^{k}_A$ by averaging it over the time frame $200 \leq L \leq 250$ and record that in the fifth columns of Tables~\ref{tbl:random}~and~\ref{tbl:random-extra}. The resulting constants $r_{n,k}$ manifest the following simple dependency on $n_A = n/2$:
 \begin{equation}
 {\cal R}^k_{A} (\rho_A) =  r_{n,k} \simeq n/2 - c_k = \text{Vol}(A) - c_k \  
 \label{eq:volume-law}
 \end{equation}
 implying the volume-law entanglement of the late-time circuit states \cite{Liu_2018}. We also find that, as the entropy order $k$ increases, the saturated value $r_{n,k}$ declines monotonically, so the shift constant $c_k>0$ can be only larger.

 \begin{figure}[t]
\subfloat[$n=12$]{
    \includegraphics[height=4.1cm]{./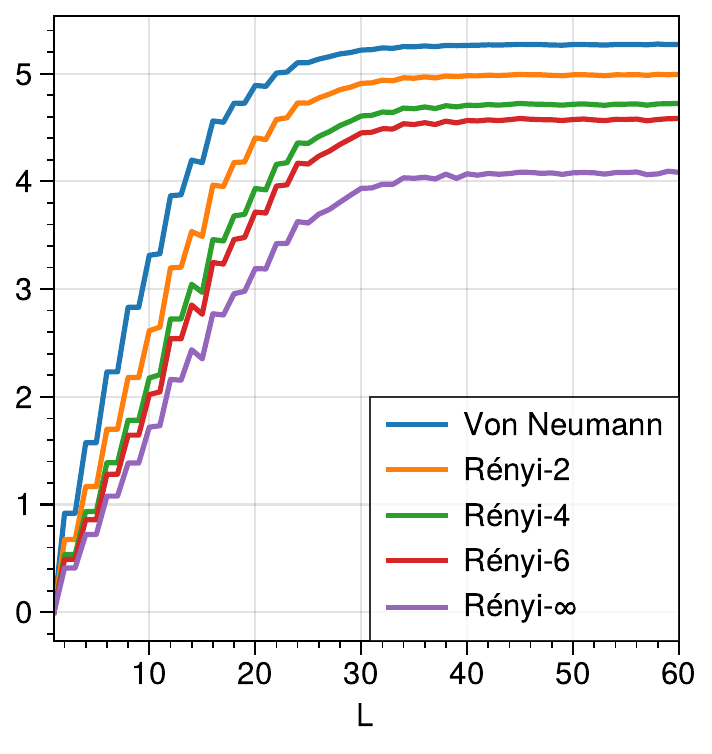}\label{fig:n12}}
\subfloat[$n=20$]{
    \includegraphics[height=4.1cm]{./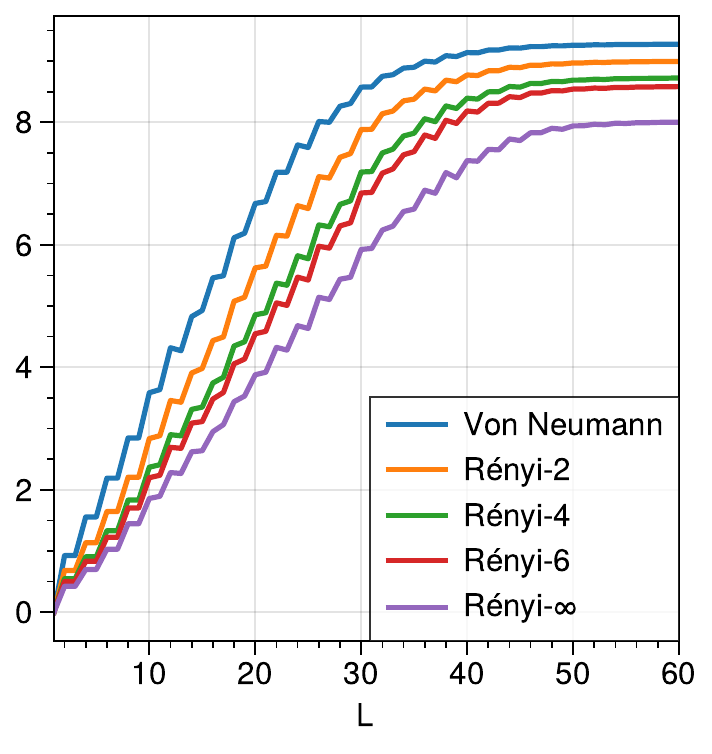}\label{fig:n20}}
\caption{Various Renyi-$k$ entropies for $n=12$ and $n=20$ averaged over 50 independent circuit states, as a function of the circuit depth $L$.}
\label{fig:entropies-generic}
\end{figure}
Combined with the discussion in Section~\ref{sec:ent}, the average entanglement curves suggest to refrain from using a variational circuit in the region of the plateau, i.e., $L \geq L_s$, in order to prepare an initial circuit state in proximity to the target ground state that follows the area-law entanglement. We now turn to examine the scaling behavior of the early-time and late-time scales, i.e., $L_l$ and $L_s$.

\begin{table}[b]
\centering
\begin{tabular}{@{}ccccccccccccc@{}}
\toprule
Type &  $n$ & $v_k$ & $L_l$ & $r_{n,k}$ & $L_s$ & Type &  $n$ & $v_k$ & $L_l$ & $r_{n,k}$ & $L_s$\\ \midrule
\multirow{7}{*}{$S_{EE}$} 
& $8$ & 0.3669 & 7 & 3.2526 & 29 & \multirow{7}{*}{$\mathcal{R}^2_A$} & $8$ & 0.2771 & 9 & 2.9722 & 29\\
& $10$ & 0.3480 & 6 & 4.2639 & 33 && $10$ & 0.2581 & 12 & 3.9821 & 36\\
& $12$ & 0.3533 & 13 & 5.2703 & 43 && $12$ & 0.2645 & 17 & 4.9896 & 43\\
& $14$ & 0.3551 & 9 & 6.2743 & 49 && $14$ & 0.2711 & 15 & 5.9944 & 49\\
& $16$ & 0.3522 & 17 & 7.2766 & 58 && $16$ & 0.2689 & 23 & 6.9973 & 60\\
& $18$ & 0.3459 & 12 & 8.2776 & 65 && $18$ & 0.2682 & 24 & 7.9986 & 65\\
& $20$ & 0.3411 & 23 & 9.2781 & 71 && $20$ & 0.2619 & 33 & 8.9993 & 71\\\bottomrule
\end{tabular}\qquad
\caption{
Phenomenological estimation of the entanglement velocity, saturation value, early-time scale, and late-time scale.}
\label{tbl:random}
\end{table}
\subsection{Timescale for the Entanglement Growth}

Let us study the early-time $L_l$ and late-time scales $L_s$, respectively, as the depth scales where the linear growth \eqref{eq:area-law} ends and where the saturation \eqref{eq:volume-law} begins. We measure $L_l$ and $L_s$ using the following operational definitions: 
 \begin{align}
     \textstyle L_l &= \max\left\{L : |\mathcal{R}^k_A(L) - v_k L| \leq  2\, \text{RMS}_{k}({0, n/2})\right\}
     \label{pheno}\\
     \textstyle L_s &= \min\left\{L : |\mathcal{R}^k_A(L) - r_{n,k}| \leq  2\,\text{RMS}_{k}({200, 250})\right\},\nonumber
 \end{align}
 $L_l$ is the maximum depth $L$ where the gap $|\mathcal{R}^k_A(L) - v_k L|$ between the Renyi entropy and its linear approximation maintains smaller than two times the RMS deviation,
 \begin{align}
     \textstyle\text{RMS}_{k}({a, b}) =  \sqrt{\frac{1}{b-a+1}  \sum_{\ell=a}^{b}\ (\mathcal{R}^k_A(\ell) - v_k \ell)^2},
     \label{eq:RMS}
 \end{align}
 for $0 \leq L \leq n/2$. Similarly, $L_s$ is the minimum depth $L$ whose difference $|\mathcal{R}^k_A(L) - r_{n,k}|$ between the Renyi entropy and its saturated value remains to be  smaller than two times the RMS deviation \eqref{eq:RMS} for $200 \leq L \leq 250$.

\begin{figure*}[t]
    \centering
    \includegraphics[width=0.85\textwidth]{./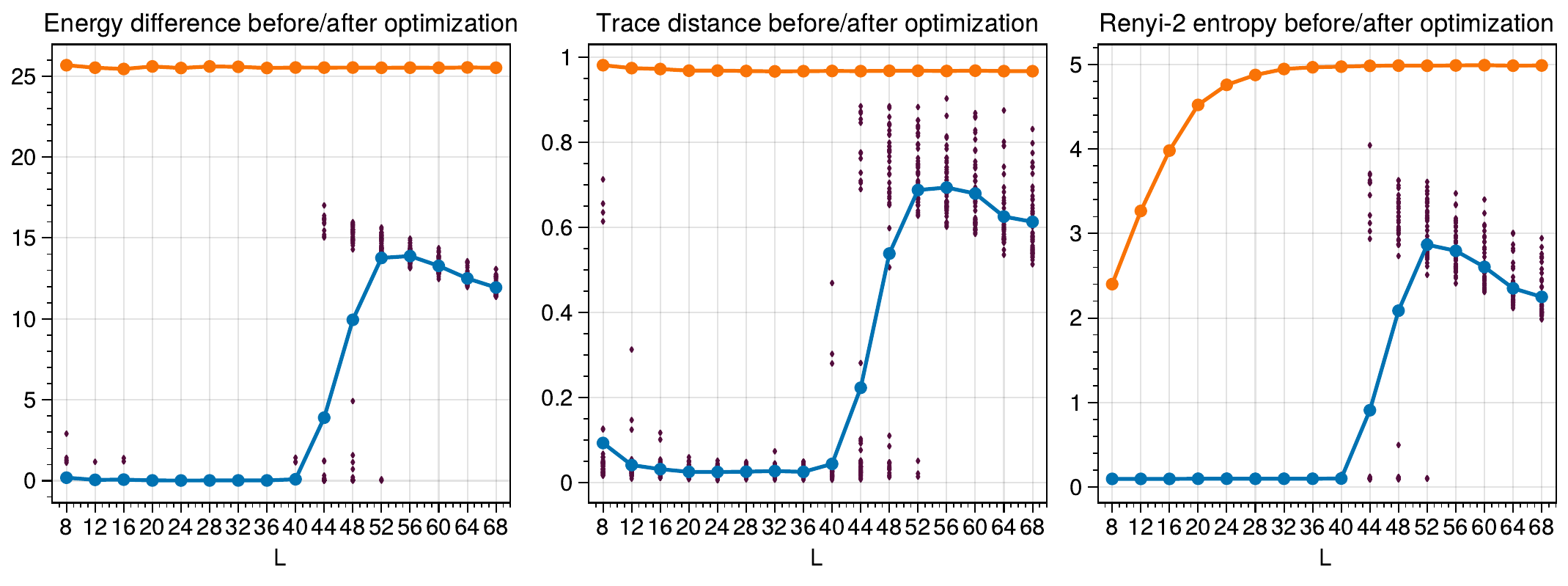}
    \caption{
    Measurements averaged over 50 independent circuits, before/after the VQA optimization with the nearest-neighbor Ising Hamiltonian \eqref{Ising1} at $g=2$, as a function of the number of circuit layers $L$.}
    \label{fig:IsingQ12Optimb}
\end{figure*}
\begin{figure*}[t]
    \centering
    \includegraphics[width=0.85\textwidth]{./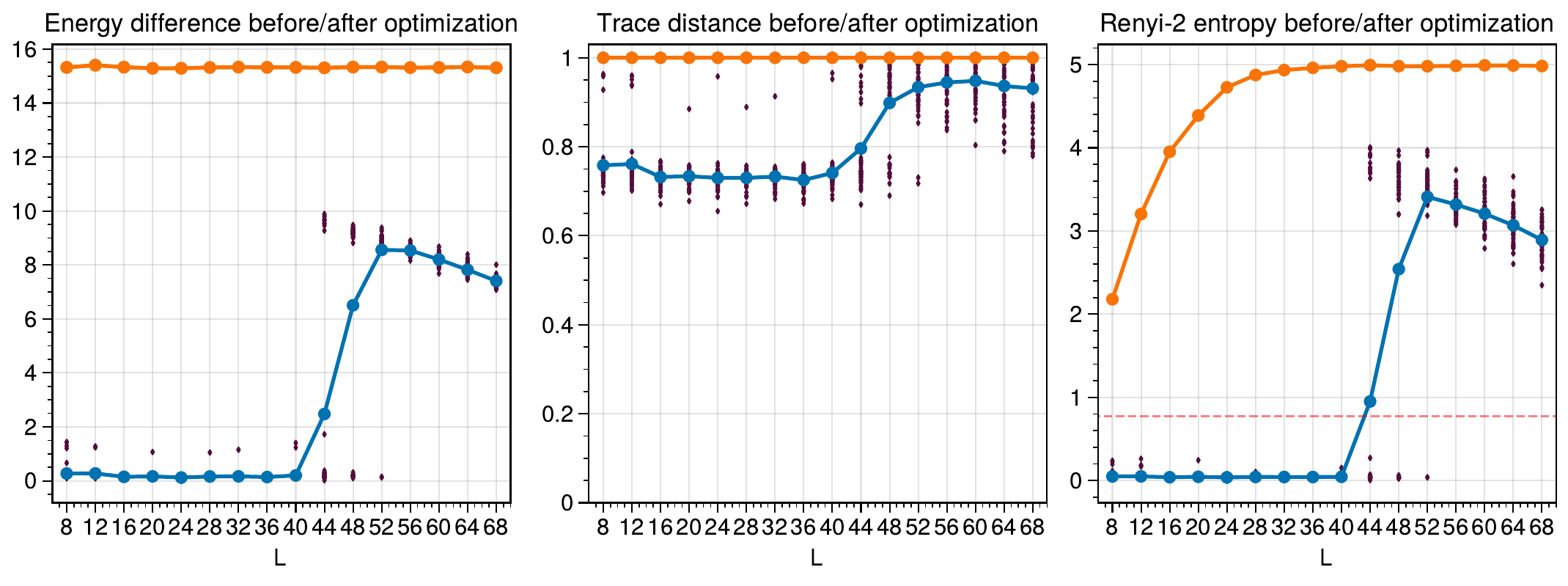}
    \caption{Measurements averaged over 50 independent circuits, before/after the VQA optimization with the nearest-neighbor Ising Hamiltonian \eqref{Ising1} at $g=1$, as a function of the number of circuit layers $L$.}
    \label{fig:IsingQ12G1Optim}
\end{figure*} 
The estimated values of $L_l$ and $L_s$ for different values of $n$ and $k$ are summarized in the fourth and sixth columns of Tables~\ref{tbl:random}~and~\ref{tbl:random-extra}. We make three observations: First, both timescales $L_l$ and $L_s$ increase as the entropy order $k$ goes higher. Second, the saturation time $L_s$ scales linearly in the system size $n$, i.e., $L_s \sim \mathcal{O}(n)$, because
\begin{equation}
 L_s  \gtrsim \frac{\text{Vol}(A) - c_k}{v_k \cdot \text{Area}(\partial A)} \sim \frac{n_A}{v_k} \sim \mathcal{O}(n).
\end{equation}
This is consistent with \cite{Liu_2018} that a unitary design that maximizes all Renyi entropies can be reached within a linear complexity in the system size $n$. Third, there exists a transient gap between $L_l$ and $L_s$, at least for finite-sized systems, in which the entanglement growth is slower. Details of the entanglement curves in this crossover region are largely model-dependent. See \cite{Liu_2014} for an example.

\begin{figure*}[t]
    \centering
    \includegraphics[width=0.85\textwidth]{./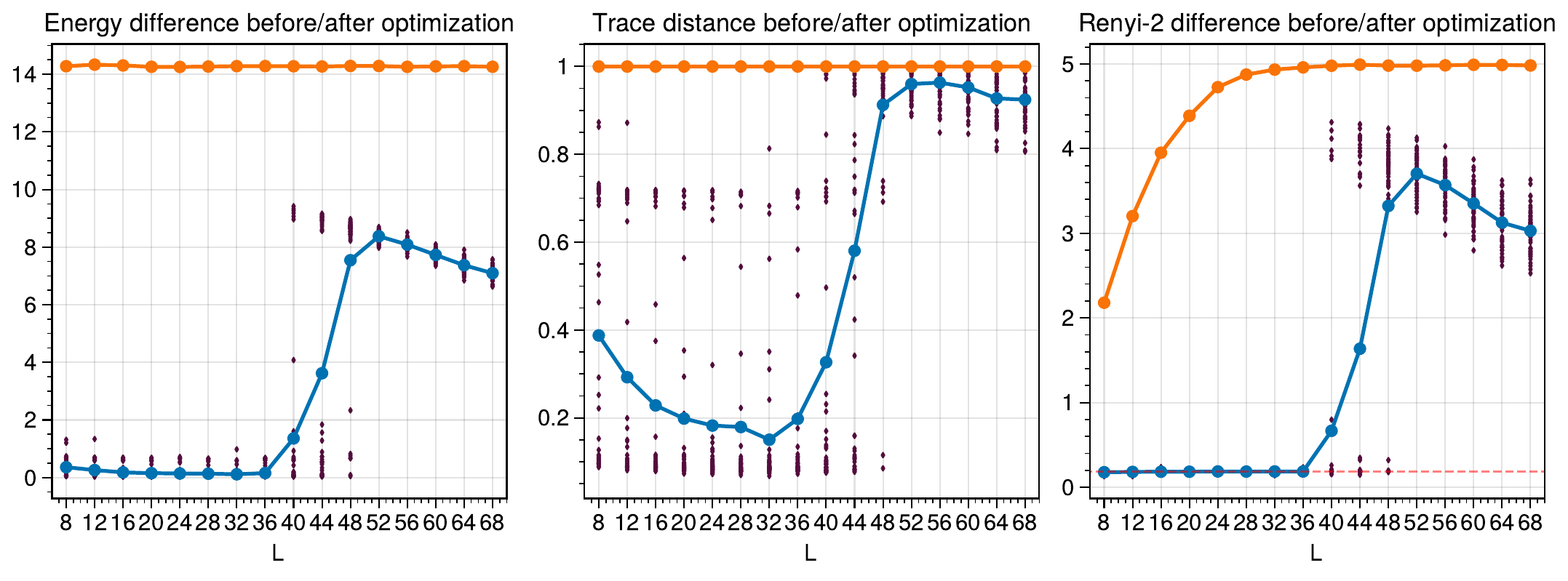}
    \caption{Measurements averaged over 50 independent circuits, before/after the VQA optimization with the non-local Ising Hamiltonian \eqref{Ising2} at $\alpha=g=1$, as a function of the number of circuit layers $L$.}
    \label{fig:IsingQ12Optim}
\end{figure*}
\begin{figure*}[t]
    \centering
    \includegraphics[width=0.85\textwidth]{./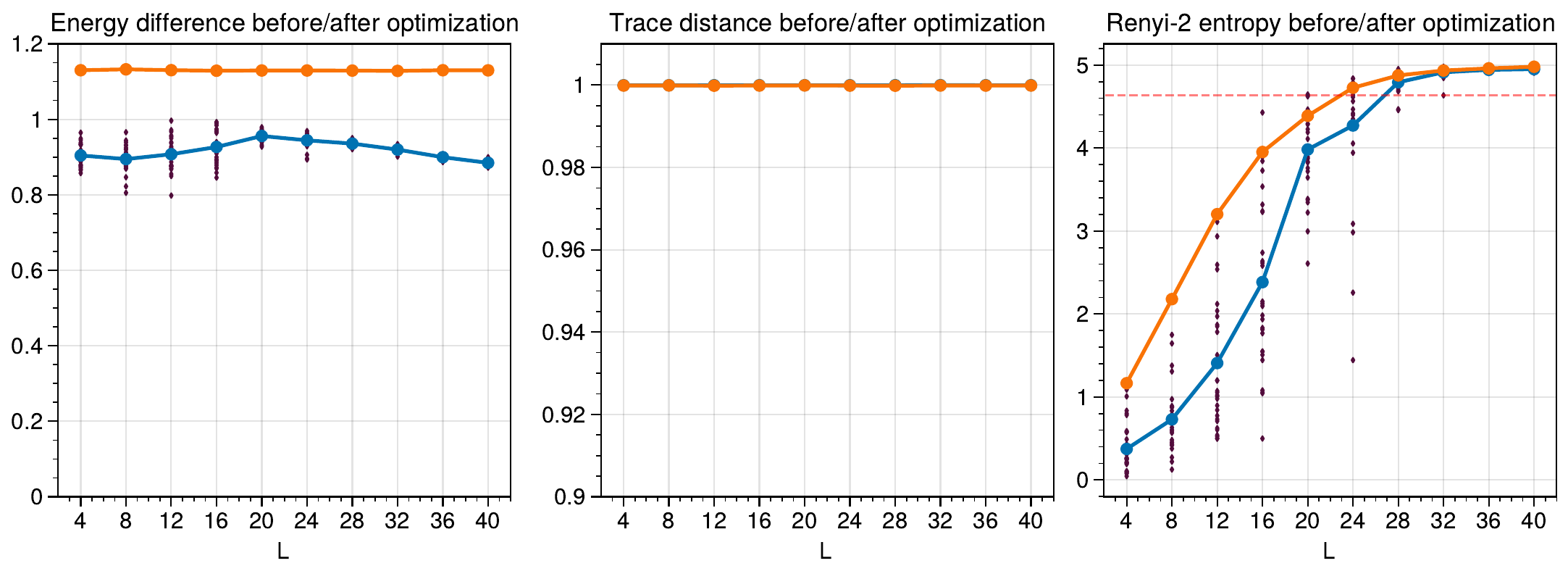}
    \caption{Measurements averaged over 30 independent circuits, before/after the VQA optimization with a particular instance of the SYK${}_{4}$ Hamiltonian \eqref{eq:syk_ham}, as a function of the number of circuit layers $L$.}    
    \label{fig:SYKQ12Optim}
\end{figure*}

\section{Circuit Optimization}
\label{sec:opt}

Our focus in this section is on the classical component of the hybrid quantum/classical algorithm. The objective is to find circuit parameters $\theta^*$ that closely approximate the ground state energy, $E(\theta^*) \simeq E_g$, by taking iterative steps proportional to the negative gradient of the energy function \eqref{E1} at each point $\theta_\tau$, i.e.,
\footnote{Estimating the gradient requires the readout of the circuit state $\rho_{c}$ at shifted gate parameters \cite{parameter_shift1} conducted by repeated measurements of Pauli strings. We will not consider the effect of the readout noise in this paper.}
\begin{align}
    \theta_{\tau + 1} = \theta_{\tau} - \eta \, \nabla E(\theta_{\tau}).
    \label{eq:gd}
\end{align}
The learning rate $\eta$ 
scales the step size 
of each update. 
A too-large $\eta$ 
can cause overshooting near the minimum $\theta^*$,
while a too-small $\eta$ 
can make the optimization trajectory
stuck at local minima. 
We will use
$\eta=0.005$ for most experiments. 
When the parameters update is small, each step of the gradient descent
can reduce the energy by 
\begin{align}
    \Delta E(\theta_{\tau}) \equiv E(\theta_{\tau+1}) - E(\theta_{\tau}) = -\eta\, \norm{\nabla E(\theta_\tau)}_2^2 \ .
    \label{eq:energy-dec}
\end{align}
Due to the constant decrease of the energy \eqref{eq:energy-dec}, we expect to  reach $E_g$ eventually if there are no other obstacles. We will terminate the iteration after updating the circuit parameter $10^4$ times in all our numerical experiments.

\subsection{Results}
\label{sec:res}

Let us discuss the eigensolver optimization results that aim to solve the ground state of many-body systems. We specifically consider the 1d transverse-field Ising models with nearest-neighbor and long-range interactions and the Sachdev-Ye-Kitaev (SYK) models. See Appendix~\ref{sec:ham} for a brief review of their Hamiltonians and ground-state entanglement properties.

\subsubsection{The Transverse-Field Ising Models}
\label{sec:isings}
We search the ground states of interacting 1d spin-chain systems. To break the degeneracy of ground states, we turn on the magnetic coupling $g$ to all the spin variables, choosing it to be $g=1$ or $2$.
As we are interested in finding a general correlation between the entanglement diagnostics and the success of optimization, not relying on specific characteristics of Hamiltonians, we study the optimization for the following three Ising models: 
\begin{enumerate}[label=(\roman*),itemsep=-1ex,partopsep=1ex,parsep=1ex]
    \item the nearest-neighbor spin coupling \eqref{Ising1} with $g=2$
    \item the nearest-neighbor spin coupling \eqref{Ising1} with $g=1$
    \item the long-range spin coupling \eqref{Ising2} with $\alpha=g=1$.
\end{enumerate}
We repeatedly perform the circuit optimization 50 times, to remove fluctuation made by random parameter initialization, and record the circuit outputs in Figures~\ref{fig:IsingQ12Optimb}-\ref{fig:IsingQ12Optim} as a function of the circuit depth $L$. 

Each figure consists of three panels. The left ones represent the energy difference $E(\theta) - E_g$ between the circuit state \eqref{E1} and the exact ground state \eqref{eq:ising_eg}. The middle ones show the trace distance between the reduced circuit state $\rho_{c,A}$ and the reduced ground state $\rho_{g,A}$. The right ones display the Renyi-2 entropy of the reduced circuit state. All the orange/blue curves therein represent a corresponding quantity before/after the optimization.

\begin{figure*}[t]
    \centering
    \includegraphics[width=0.90\textwidth]{./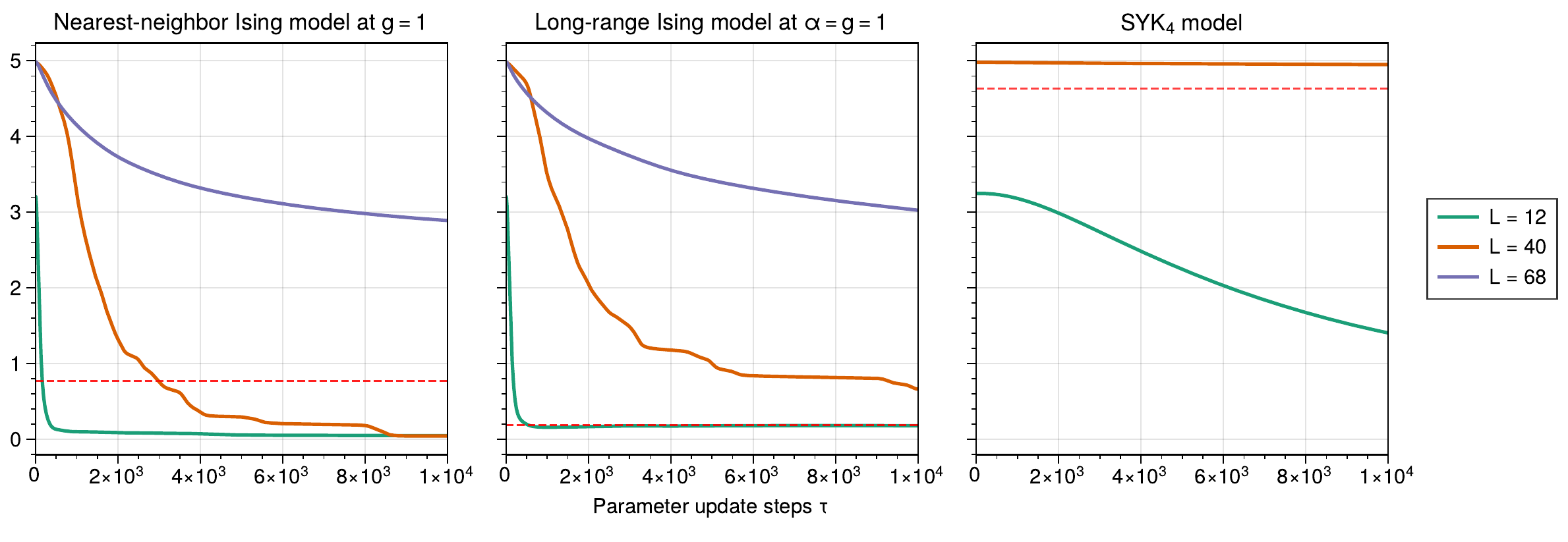}
    \caption{The evolution of the Renyi-$2$ entropy by the gradient-based VQA optimization with the Ising and SYK${}_4$ Hamiltonians, i.e., \eqref{Ising1}, \eqref{Ising2}, \eqref{eq:syk_ham}, for $n=12$ qubits. The horizontal direction denotes the number of parameter updates $\tau$. The dashed line shows the Renyi-$2$ entropy of the exact ground states for each Hamiltonian.}
    \label{fig:training-curve}
\end{figure*} 

Figure~\ref{fig:IsingQ12Optimb} is for the nearest-neighbor Ising model \eqref{Ising1} with $g=2$. It reveals the relation between the average entanglement entropy of initial circuit states and the success of gradient-based optimization: The optimization works well for the circuits with $L \lesssim 36$. However, in the intermediate range of $36 \lesssim L \lesssim 52$, the success rate gradually lowers as the circuit becomes deeper.  Beyond that, i.e., $L \gtrsim 52$, it always fails to close the gap between the exact ground state and the circuit state as to their energy and entanglement entropies. Such relation shows an advantage of using the circuits with $L < L_s$, whose entanglement curve has not reached the plateau. The above pattern 
also persists in Figures~\ref{fig:IsingQ12G1Optim}~and~\ref{fig:IsingQ12Optim}, which correspond to the nearest-neighbor and long-range Ising models with $g=1$.

 A notable difference of Figure~\ref{fig:IsingQ12G1Optim} from Figures~\ref{fig:IsingQ12Optimb} or~\ref{fig:IsingQ12Optim} appears in the trace distance curve, where the optimization fails to narrow the distance even when the circuit energy is close to the exact ground state energy. It is related to the fact that the ground state entanglement entropy in the $g=1$ nearest-neighbor Ising model is higher than those in other Ising models, as shown in Figure~\ref{re}. When the entanglement entropy of target ground states is higher, the local search of approximating circuit parameters becomes increasingly difficult. Such difficulty leads to deviations between the post-optimization circuit state and the exact ground state, to which the trace distance reflects much more sensitively than the energy and entanglement entropy differences.

\subsubsection{The Sachdev-Ye-Kitaev Model}
\label{sec:sykoptim}

 We will now discuss the circuit optimization in a situation where the Hamiltonian ground state exhibits
a volume law scaling of the entanglement entropy. 
 
The SYK${}_4$ Hamiltonian \eqref{eq:syk_ham} defined with an instance of random coupling constants has a ground state that follows the volume law scaling of entanglement \cite{Huang_2019}, as reviewed in Appendix~\ref{sec:ham} and specifically in Figure~\ref{re}. We optimize the variational circuit to approximate the SYK${}_4$ ground state and summarize the output in Figure~\ref{fig:SYKQ12Optim} as a function of the circuit depth $L$.

Since the approximation target state itself behaves in terms of entanglement like a generic quantum state, the optimization task is now much more challenging. Unlike the optimization towards the Ising ground state, even choosing a less entangled circuit within the range $L \lesssim 36$ does not lead to success. Figure~\ref{fig:SYKQ12Optim} manifests this failure, not only in the trace distance between the circuit and exact ground states but also their differences of energy and entanglement entropy.

\subsubsection{Optimization Speed}

As another indicator of how difficult the circuit optimization is, we draw in Figure~\ref{fig:training-curve} the evolution of Renyi-2 entanglement entropy $\mathcal{R}^2_A(\rho_{c,A})$ as a function of the number of parameter updates $\tau$. The three curves therein are for the circuits with $L=12$, $40$, $68$, which represent the characteristics of low-, intermediate-, high-depth circuits. The entanglement entropy of the ground states is marked by the dashed lines.

Towards the Ising ground states, the gradient descent works efficiently for the $L=12$ circuit, rapidly reducing the entanglement entropy  within $10^3$ steps of the update. However, the same gradient descent takes a much longer time for the $L=40$ circuit and even fails to reach closely the target state for the $L=68$ circuit. As the average entanglement entropy of initial circuit states increases, the optimization difficulty becomes more evident not only in the trace distance, as in Figure~\ref{fig:IsingQ12G1Optim}, but also through the entanglement diagnostics.

The optimization task towards the SYK${}_4$ ground state is inherently more challenging such that all three curves leave a large entanglement gap from the target state. Interestingly, the gradient descent constantly reduces the entanglement entropy of the $L=12$ circuit state, enlarging the gap over the optimization steps $\tau$. 
In general, over-parameterization can assist the circuit optimization that starts from/ends at a highly-entangled typical quantum state. An exponentially high-dimensional parameter space
was needed for the SYK${}_4$ model to approximate its ground-level energy with very high precision \cite{highdepth}.

\subsection{Entanglement Diagnostics and Optimization}

Our results shown in Section~\ref{sec:res} exemplify the difficulty in finding a successful optimization trajectory that starts from or ends at a typical quantum state that takes up the vast majority of the Hilbert space. This has been best described through the evolution of the entanglement entropies, \eqref{Re} and \eqref{SEE}, over the optimization steps, rather than a more commonly-used sensitive metric such as the trace distance between the circuit and target states.

Suppose we can divide the Hilbert space into two subregions distinguished by their entanglement entropies, say $A$ and $B/C$, in accordance with Figure~\ref{fig:random_growth}. Generic random states belong to the region $B/C$ whose entanglement entropies are approximately maximal.

For many interesting applications, the target state is a non-generic state that resides in the region $A$, i.e., following the area-law scaling of the entanglement \cite{Lucas_2014}. Along an optimization path inside the region $A$,  the  circuit state entanglement tends to decrease regularly. However, when an initial state $\rho_{c}(\theta_\text{in})$ belongs to the region $B/C$, the local parameter update \eqref{eq:gd} is unable to cross over to the region $A$, thus failing to reach the ground state energy. We make these observations from the optimization result in Section~\ref{sec:isings} that discusses the Ising Hamiltonians.

\begin{figure}[tb]
\centering
\subfloat[Von Neumann Entropy]{
    \includegraphics[height=4.1cm]{./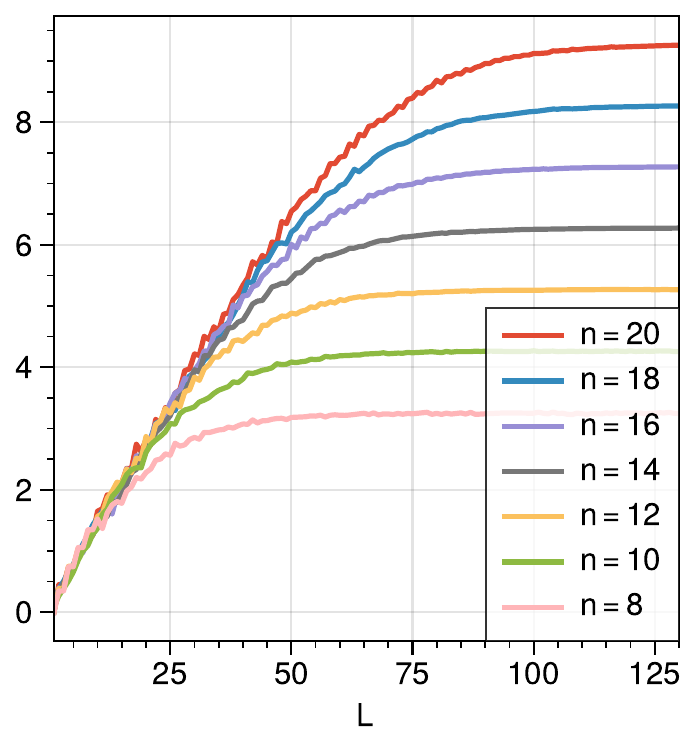}}
\subfloat[Renyi-2 Entropy]{
    \includegraphics[height=4.1cm]{./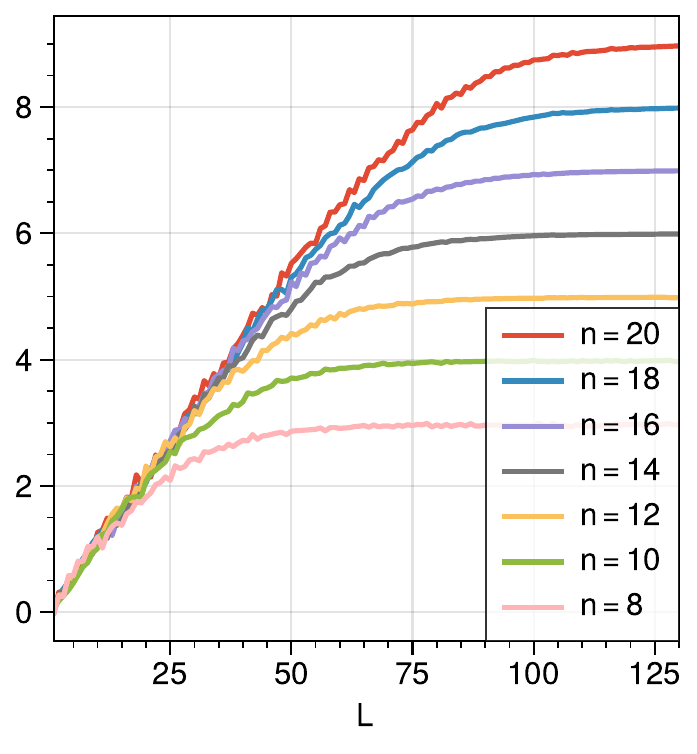}}
\caption{Von Neumann and Renyi-2 entropies for $8 \leq n \leq 20$ averaged over 50 independent $p=1/2$ circuit states, as a function of the circuit depth $L$.}
\label{fig:entropy-growth-stochastic}
\end{figure}

Even when the target state is maximally entangled and lies in the region $B/C$, the entanglement entropy of the circuit state $\rho_{c}(\theta_{\tau})$ still tends to decrease on average. It means that the entanglement gap between the circuit and target state can become larger, if an initial circuit state has a smaller entanglement entropy than the target state, i.e., $\mathcal{R}^{k}_A(\rho_{c, A}) \lesssim \mathcal{R}^{k}_A(\rho_{g, A})$.
When $\mathcal{R}^{k}_A(\rho_{c, A}) > \mathcal{R}^{k}_A(\rho_{g, A})$, the optimization moves towards narrowing the gap, but often failing to match a desired level of the entanglement. These observations are based on the optimization results in Section~\ref{sec:sykoptim}, obtained for the 1d SYK Hamiltonian.

The numerical results suggest that the Hilbert space can be partitioned into multiple layers, distinguished by the supported amount of the bipartite entanglement entropy. It is a very demanding task to move across distant layers via the gradient descent \eqref{eq:gd}, which is doable only for the over-parameterized circuits that involve exponentially large parameter space \cite{highdepth}.

\begin{figure}[tb]
\centering
\subfloat[$n=12$]{
    \includegraphics[height=4.1cm]{./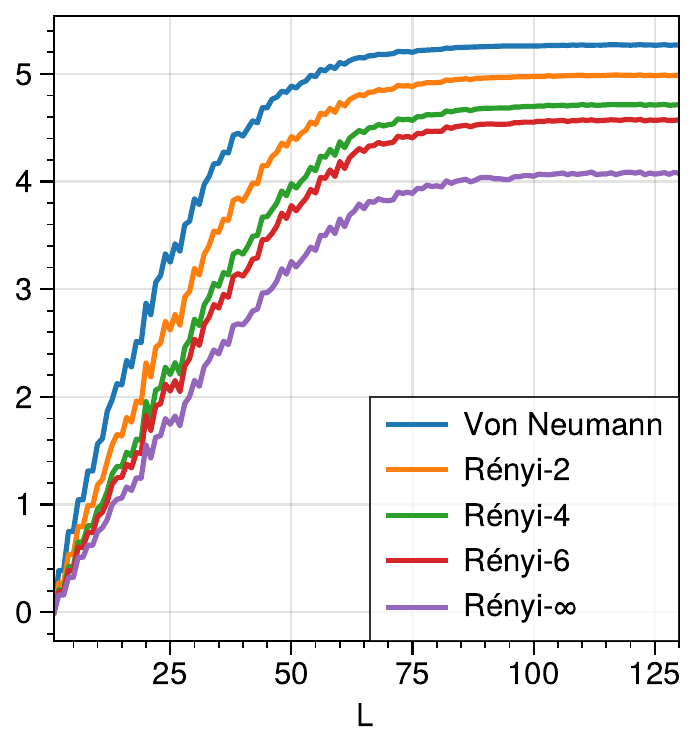}}
\subfloat[$n=20$]{
    \includegraphics[height=4.1cm]{./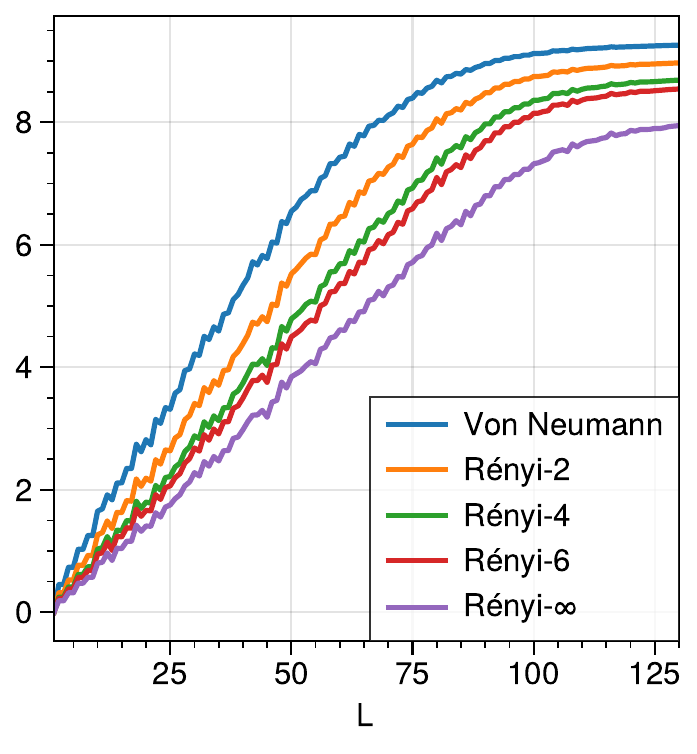}}
\caption{Various Renyi-$k$ entropies for $n=12$ and $n=20$ averaged over 50 independent  $p=1/2$ circuit states, as a function of the circuit depth $L$.}
\label{fig:entropies-generic1}
\end{figure}

\section{Other Circuit Architectures}
\label{sec:other}
We discussed the importance of choosing the circuit to avoid the saturation of its average initial entanglement entropy, for a generic optimization task that finds a target state with the area law entanglement. 
This section examines if the entanglement diagnostic still serves as an indicator of efficient circuit optimization with different circuit architectures. We also consider the effect of reducing the number of circuit parameters while maintaining a similar degree of entanglement.

\begin{figure*}[!t]
    \centering
    \includegraphics[width=0.85\textwidth]{./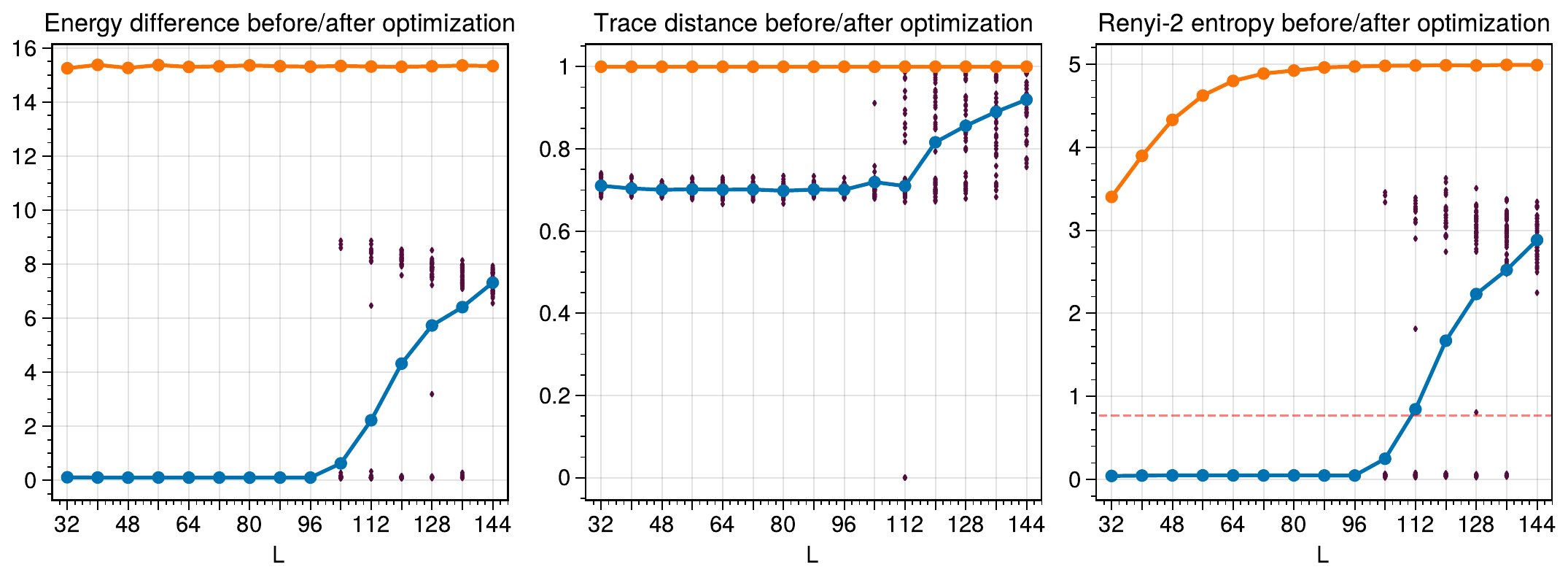}
    \caption{Measurements averaged over 50 independent $p=1/2$ circuits, before/after the VQA optimization with the nearest-neighbor Ising Hamiltonian \eqref{Ising1} at $g=1$, as a function of the number of circuit layers $L$.}    
    \label{fig:IsingQ12Optim-stochastic}
\end{figure*}
\begin{figure*}[t!]
    \centering
    \includegraphics[width=0.85\textwidth]{./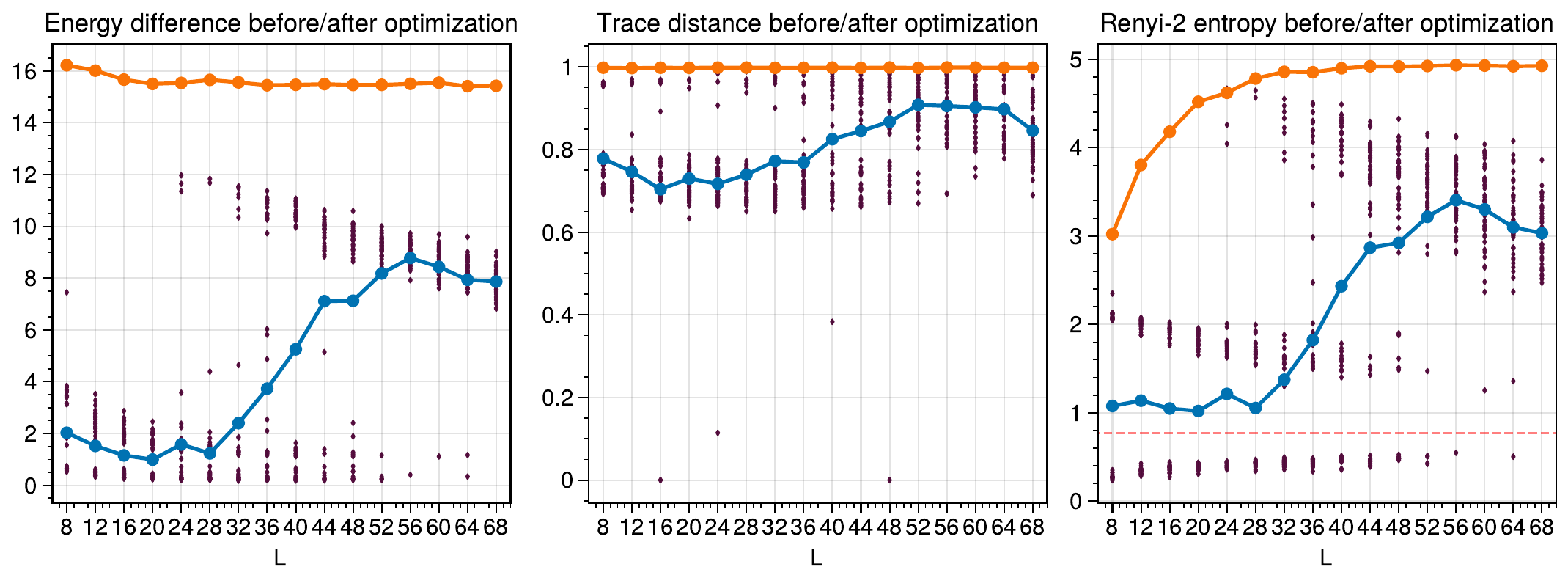}
    \caption{Measurements averaged over 50 independent restricted circuits \eqref{sub}, before/after the VQA optimization with the nearest-neighbor Ising Hamiltonian \eqref{Ising1} at $g=1$, as a function of the number of circuit layers $L$.}    
    \label{fig:IsingQ12Optim-layerequal}
\end{figure*}

\subsection{Random Graph Architecture}
\label{sec:random}

Let us study a simple stochastic variation of the circuit architecture that omits the $CZ$ entangler \eqref{eq:cz} inside the 2-qubit gate of Figure~\ref{fig:circuit_diagram_right} with a fixed probability $p=1/2$.

\subsubsection{Entanglement Growth}
Since the average number of the entangler is cut in half, we expect that the entanglement growth rate would be halved. Accordingly, the circuit depth $L$ to reach the saturation of the entanglement entropy would be doubled.

Figures~\ref{fig:entropy-growth-stochastic}~and~\ref{fig:entropies-generic1} show the evolution of entanglement diagnostics as a function of the circuit depth $L$, estimated by the sample averages of 50 random states.
 The overall shape of the curves remains the same, but the growth rate has significantly decreased. Reaching a certain level of the entanglement diagnostics requires twice the circuit depth compared to the non-stochastic  architecture, i.e., $p=1$, as expected.
 See Figure~\ref{fig:geometry-prob} for the curve of the geometric measure whose growth rate has been halved.

\subsubsection{Optimization}

Given the optimization task that reaches the nearest-neighbor Ising ground energy \eqref{eq:ising_eg} with the background field coupling $g=1$, the outputs of the $p=1/2$ stochastic circuit are all collected in  Figure~\ref{fig:IsingQ12Optim-stochastic} as a function of $L$.

The depth range of the $p=1/2$ circuits where the gradient descent remains successful has increased to $L \lesssim 96$. Beyond that, the optimization success rate continues to drop until it reaches $0\%$ at $L \sim 144$ and above. This is consistent with the entanglement growth curves, which  maintain the same overall shape as in Section~\ref{sec:randomcircuit} but only with a lower growth rate. We remark that the low and intermediate ranges, in which the optimization may succeed with a non-zero probability, has been extended to $L \lesssim 136$, more than mere doubling. It is the impact of the expanded parameter space whose dimension has been doubled, as required for the $p=1/2$ circuit to hold the same level of entanglement.

Over the entire range of $L$, unlike the trace distance, the entropy diagnostic holds a robust correlation with the successful minimization of the circuit energy \eqref{E1}, showing its usefulness regardless of circuit-specific details.

\subsection{Restricted Circuit Parametrization}
\label{sec:param}
Recall that an additional circuit layer  increases both the average entanglement entropy of initial circuit states and the number of classical parameters. To isolate the effect of the classical parameter space, we study the consequence of imposing the following restriction:
\begin{equation}
\theta_{\ell,1} = \theta_{\ell, 2} =  \cdots = \theta_{\ell, n} \quad \text{ for all } 1\leq \ell \leq L \ ,
\label{sub}
\end{equation}
which equates all the parameters in each layer, 
yet maintains the same growth rate of  entanglement diagnostics.

\begin{figure}[!t]
\centering
\subfloat[Von Neumann Entropy]{
    \includegraphics[height=4.1cm]{./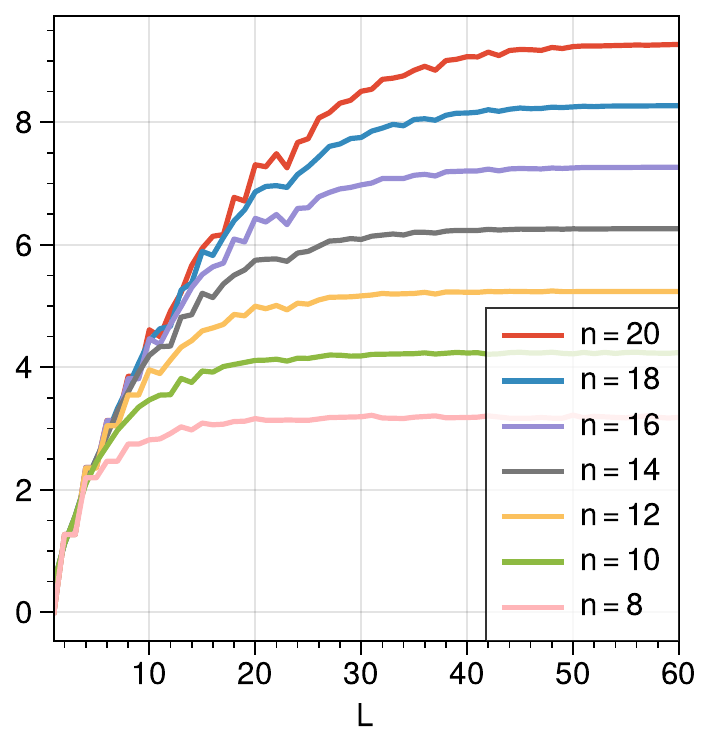}}
\subfloat[Renyi-2 Entropy]{
    \includegraphics[height=4.1cm]{./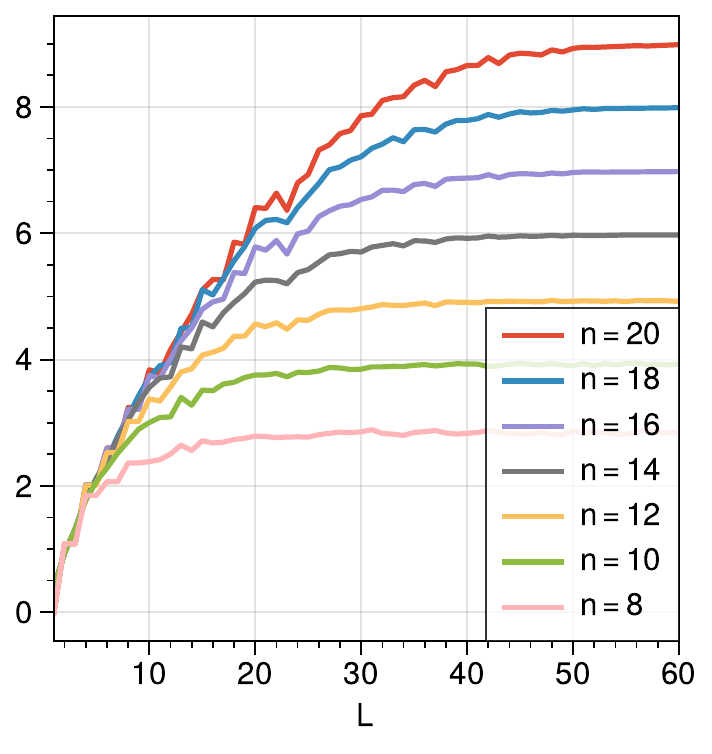}}
\caption{Von Neumann and Renyi-2 entropies for $8 \leq n \leq 20$ averaged over 50 independent restricted circuit states \eqref{sub}, as a function of the circuit depth $L$.}
\label{fig:entropy-growth-layerequal}
\end{figure}

The basic 2-qubit gate ${\cal O}_{i,j}$ 
in Figure~\ref{fig:circuit_diagram_right} reads:
\begin{equation}
{\cal O}_{i,j} = CZ_{i,j}  \cdot R(\theta_{l,i})\otimes R(\theta_{l,j}) \ ,    
\end{equation}
where
$CZ=\text{diag}(1,1,1-1)$ and $R(\theta)$ is the Pauli rotation \eqref{eq:paulirot} around the $y$-axis.
It is curious to note that the constraint
(\ref{sub}) is equivalent
to imposing $[{\cal O}_{i,j},Q_{i,j}] = 0$ 
on the Hilbert space of $(i,j)$ qubits, where:
\begin{equation}
Q_{i,j} =
\begin{pmatrix}
q_1+q_2  & 0 & 0 & 0\\
0 & q_1 & q_2  & 0 \\
  0 & q_2 & q_1  & 0 \\ 
   0 & 0 & 0  &  q_1+q_2 \\
 \end{pmatrix}    
 \end{equation}
in the computational basis of $(i,j)$ qubits.
Still, there is no globally conserved charge written as a tensor product sum of $Q_{i,i+1}$,
because it does not generically commute with ${\cal O}_{i-1,i}$ and ${\cal O}_{i,i+1}$ on the next layer.

\subsubsection{Entanglement Growth}

The entanglement entropies averaged over 50 random circuit states under the parameter space restriction \eqref{sub}  are illustrated in Figures~\ref{fig:entropy-growth-layerequal}~and~\ref{fig:entropies-layerequal} as a function of the number of circuit layers $L$. Except small extra wiggles, the overall growth shape and speed of the entanglement diagnostics are similar to those of the unconstrained circuit. 
Such correspondence of the entanglement growth curves renders the restricted circuit  an appropriate setup to study separately the effect of the parameter space dimension on the circuit optimization.

As a side remark, we have seen that the evolution curve of the geometric measure, illustrated in Figure~\ref{fig:geometry-rest}, for the restricted circuit is far more fluctuating than as for the unconstrained circuit, while their saturation depth scales remain largely the same.

\begin{figure}[t]
\centering
\subfloat[$n=12$]{
    \includegraphics[height=4.1cm]{./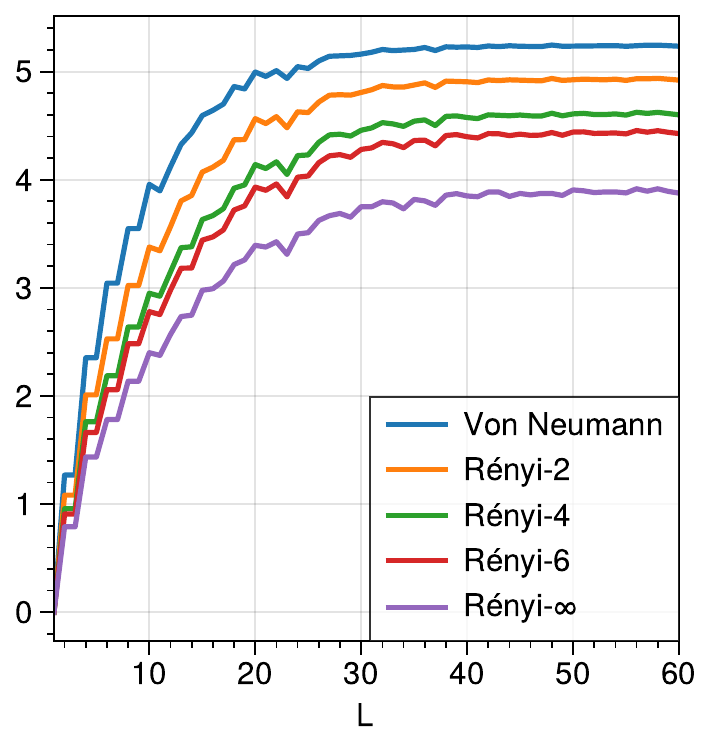}}
\subfloat[$n=20$]{
    \includegraphics[height=4.1cm]{./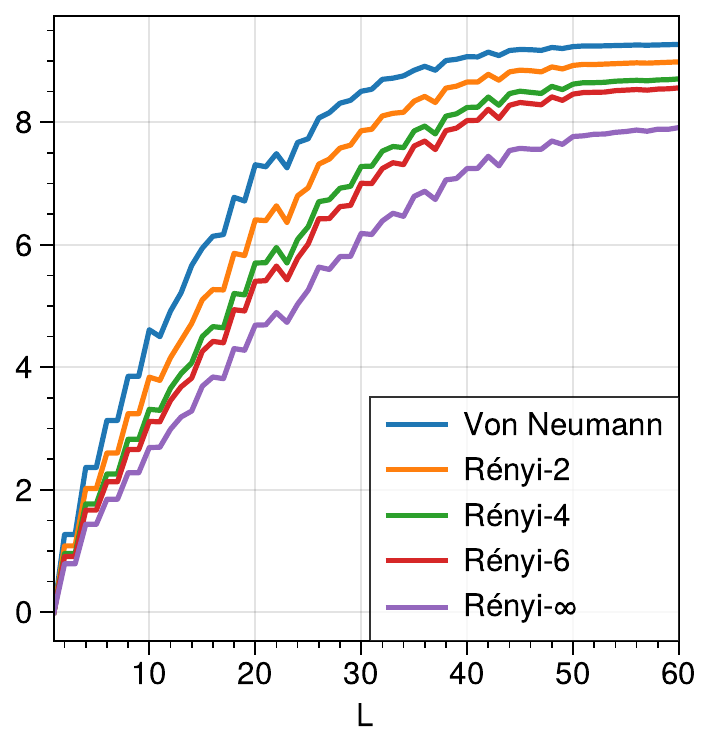}}
\caption{Various Renyi-$k$ entropies for $n=12$ and $n=20$ averaged over 50 independent restricted circuit states \eqref{sub}, as a function of the circuit depth $L$.}
\label{fig:entropies-layerequal}
\end{figure}

\subsubsection{Optimization}

We optimize the restricted circuit to approximate the ground state of the nearest-neighbor Ising model at $g=1$ using the gradient descent. The results are summarized in Figure~\ref{fig:IsingQ12Optim-layerequal} as a function of $L$. It is notable that even the circuit optimization with $L \lesssim 20$ often stops at $\Delta E \gtrsim 1$, not giving a reliable approximation of the ground energy. Furthermore, starting from $L\gtrsim 24$, an increasingly large proportion of the randomly initialized circuits fails to reach the ground level energy $E_g$ and leave a large gap, i.e., $\Delta E \gtrsim 8$. Such transitional result emerges at a much lower depth than $L=44$ of the unconstrained circuit.

It emphasizes the importance of having enough parameters in applying the gradient descent to optimization tasks, even for those circuits that remain within a suitable range of the entanglement diagnostics.

\section{Discussion and Outlook}
\label{sec:dis}

In this paper we considered the variational circuit model of quantum computation, arguing for the effectiveness of entanglement diagnostics in finding the circuit architecture for efficient parameter optimization that minimizes the Hamiltonian expectation value. Introduced as a distance measure between the circuit and target states, the entanglement diagnostic has shown its usefulness by illustrating that quantum circuits states within a suitable range of entanglement entropies can successfully reach the ground level energy of local Hamiltonians. It also says that, while entanglement is a valuable non-local resource for quantum computation, circuit states being highly entangled do not necessarily have an advantage but it can be rather the opposite.\footnote{See also the discussion in \cite{GM}.}

One way to control the average entanglement entropy of circuit states is to adjust the number of circuit layers. The mean entropy grows linearly with the circuit depth, then gradually slows down, and finally converges to a constant near the theoretical maximum. Denoting by $L_s$ the saturation depth beyond which the average entanglement entropy has converged, we divided the depth range into two intervals, $L < L_s$ and $L \geq L_s$, and  called them respectively $A$ and $B/C$. $A$ is typically the optimal region that leads to efficient VQA computation, e.g., when we search an area-law entangled target state, while $B/C$ can suffer from the barren plateau problem, 
One can further differentiate $C$ from $B$ based on whether the optimization success rate has become $0\%$ or not yet.

Although the assumption of an area-law entangled target state covers most of the interesting VQA applications \cite{Lucas_2014}, the ground states of some important Hamiltonians exhibit volume-law entanglement scaling. Matching the entanglement diagnostic alone is not sufficient to approximate such states due to the overwhelming population of highly-entangled quantum states. We need deep variational circuits whose depth $L$ lies in $B/C$ and that are equipped with a large parameter space that can assist high-resolution specification and approximation of the desired target state \cite{highdepth}. Having more circuit parameters can generally help to approximate the ground state better, as exemplified by the decreased accuracy for a reduced number of independent variables\footnote{It is not a conflict with \cite{Funcke_2021} which reduces parameter space redundancy by identifying the principal directions carefully for maximal expressibility. Our reduction is rather arbitrary and without a guarantee that the circuit with remaining independent parameters can be maximally expressible.} in Section~\ref{sec:param}, as well as the increased success rate for circuits with extra single-qubit rotation parameters \cite{highdepth, circuit_hessian}.

There are many follow-up directions for further investigation: First, for having additional substantial evidence to the validity of the entanglement diagnostics, it would be crucial to consider 2d gapped local Hamiltonians whose ground states follow the area law entanglement scaling but are difficult to approximate. Second, we would like to explore various circuit architectures, e.g., using other rotation and entangling gates \cite{haug2021capacity}, built on different graph structures \cite{Harrow:2019lyw,Cervera_Lierta_2019}, or conserving diffusive charges \cite{sub,diffusive}.
Especially, symmetry-preserving circuits can work efficiently for the VQA optimization if the target state is known to respect the imposed built-in symmetries \cite{ising_circuit,ising_circuit2}. Third, the layered circuit defines a discrete dynamical system. We would like to investigate the appearance of quantum chaos in the circuit wavefunction, such as the emergence of random matrix ensemble for the reduced density matrix \cite{Chen_2018} and the operator spreading \cite{Mezei:2016wfz}, relating them to the efficient VQA optimization \cite{KO2}. Finally, it is important to study different types of noise and analyse how they
affect the VQA optimization performance \cite{barron2020measurement,fontana2020optimizing}.

\section*{Acknowledgements}

We would like to thank Khen Cohen, Tom Levy, Eun Gook Moon, Muli Safra and Lior Wolf
for valuable discussions.
We are grateful to Jaedeok Kim and Dario Rosa for the collaboration at the early stage of this project. The work of J.K. is supported by the NSF grant PHY-1911298 and the Sivian fund.
The work of Y.O. is supported in part by  Israel Science Foundation Center
of Excellence,
the IBM Einstein Fellowship and John and Maureen Hendricks Charitable Foundation
at the Institute for Advanced Study in Princeton.
Our Python code for the numerical experiments is written in TensorFlow Quantum \cite{TFQ}. The experimental data are managed by using Weights \& Biases \cite{wandb}.

 \begin{figure*}[p]
    \includegraphics[height=5.4cm]{./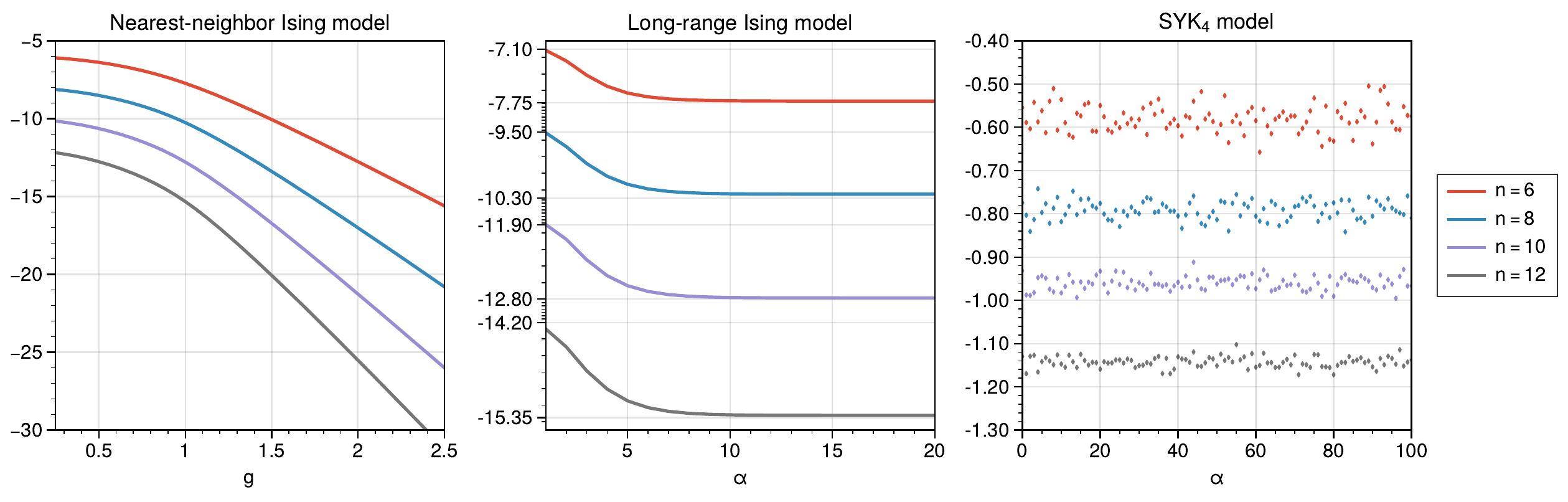}
    \caption{The ground energy of the following Hamiltonian systems of different sizes $n$: (Left) Nearest-neighbor Ising Hamiltonian with different $g$. (Middle) Long-range Ising Hamiltonian at $g=1$ with different $\alpha$. (Right) SYK${}_4$ Hamiltonian with 100 different instances of Gaussian random couplings.
    }
    \label{gse}
\end{figure*}
\appendix

\section{Hamiltonians}
\label{sec:ham}
Let us consider the following 1d Hamiltonian systems: the nearest-neighbor and long-range Ising models coupled to a transverse magnetic field and the Sachdev-Ye-Kitaev model \cite{SYK,2015Kitaev,Maldacena_2016}. Here we summarize some of their important characteristics, including the entanglement entropy scaling of their ground states.

\begin{figure*}[p]
\centering
    \includegraphics[height=5.6cm]{./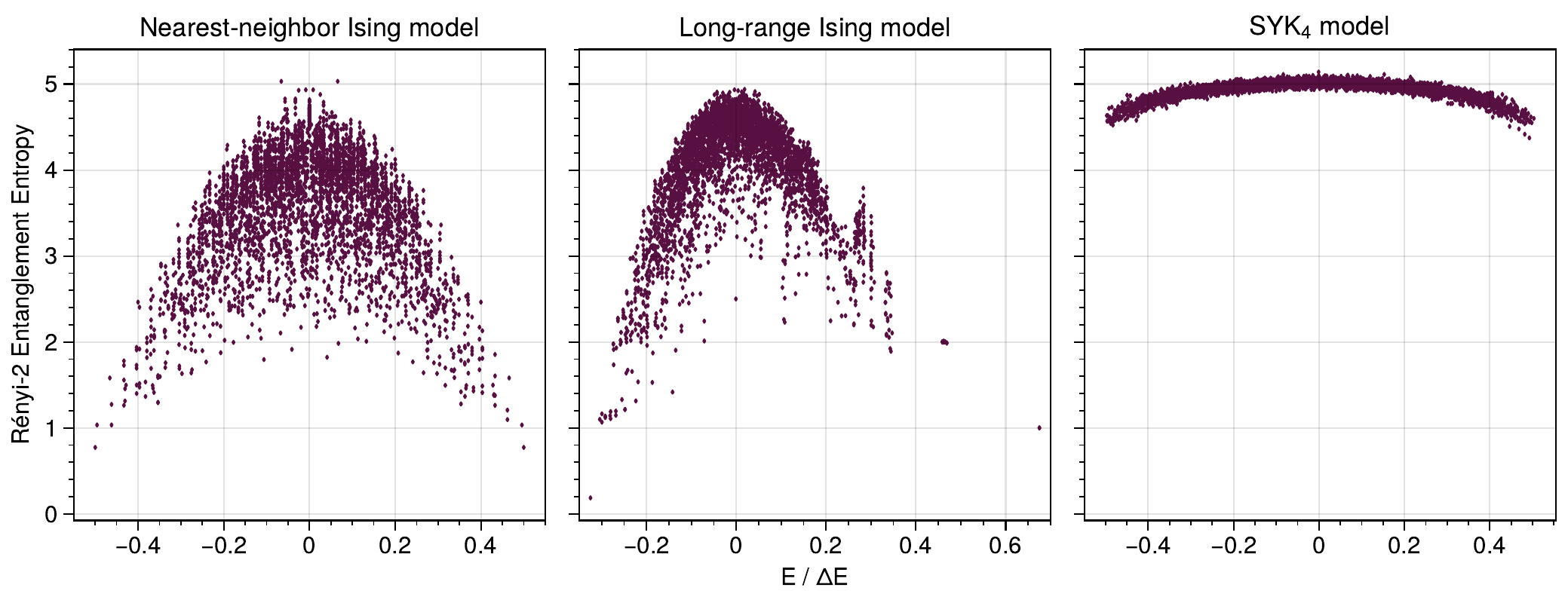}
\caption{The scatter plot of energy and Renyi-2 entropy over all eigenstates of the following $n=12$ Hamiltonians:  (Left) Nearest-neighbor Ising Hamiltonian with different $g$. (Middle) Long-range Ising Hamiltonian at $g=1$ with different $\alpha$. (Right) SYK${}_4$ Hamiltonian with 100 different instances of Gaussian random couplings.}
\label{re2}
\end{figure*}

\begin{figure*}[p]
\centering
    \includegraphics[height=5.7cm]{./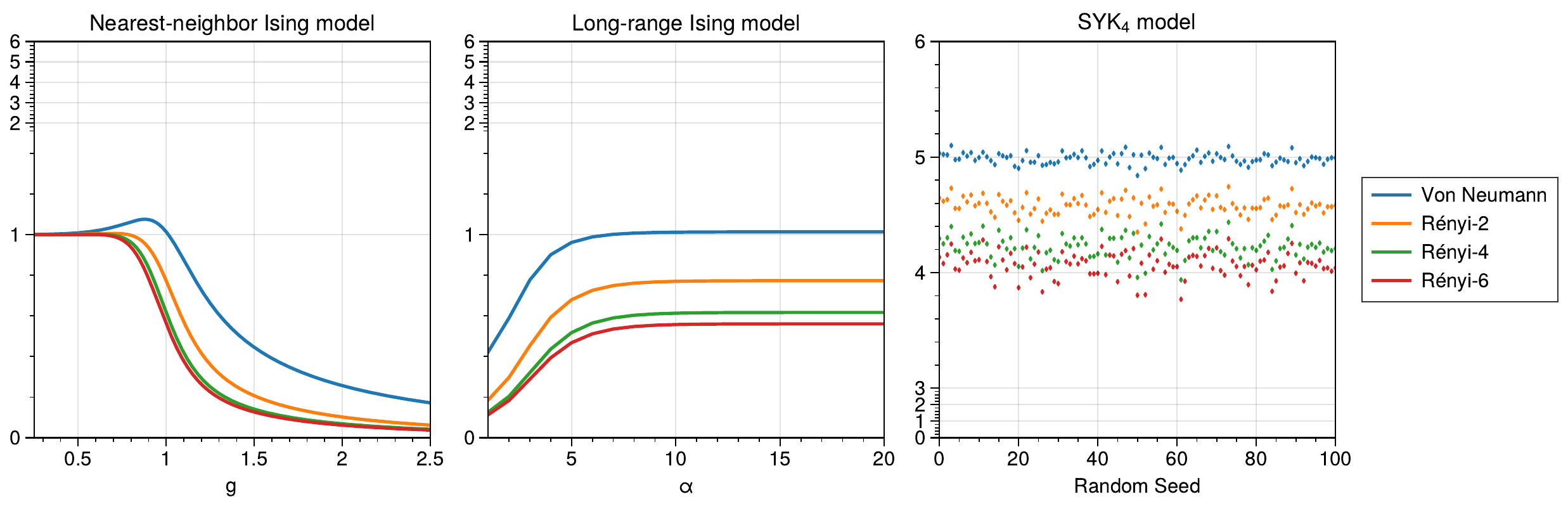}
    \caption{The ground state entanglement entropies of the following Hamiltonian systems of different sizes $n$:  (Left) Nearest-neighbor Ising Hamiltonian with different $g$. (Middle) Long-range Ising Hamiltonian at $g=1$ with different $\alpha$. (Right) SYK${}_4$ Hamiltonian with 100 different instances of Gaussian random couplings.}
\label{re}
\end{figure*}

\subsection{Nearest-Neighbor Ising Model}

The 1d Hamiltonian with the nearest-neighbor spin interaction coupled to a transverse magnetic field reads:
\begin{align}
    H =  \sum_{i=1}^n \sigma_{z,i} \sigma_{z,i+1} + g\sum_{i=1}^n \sigma_{x,i} \ ,
\label{Ising1}
\end{align}
where we assume the periodic boundary condition $i \sim i+n$ and the magnetic coupling $g$ being positive. 
$\sigma_{x/z,i}$ are Pauli-$x/z$ matrices acting on the $i$'th spin, respectively.

One can solve this Hamiltonian exactly, leading to the following ground-level energy \cite{ground}: 
\begin{equation}
E_{g} = - \sum_{k=0}^{n-1}\left(1 + g^2 - 2g \cos\left(\frac{2 \pi k}{n}\right)\right)^{1/2} \ .
\label{eq:ising_eg}
\end{equation}
The $g=0$ ground state is in the anti-ferromagnetic phase where all spin variables are aligned antiparallel to neighboring spins, such that $E_g = -n$. 
As $g$ grows, the spin-field coupling contributes more significantly to $E_g$.
Especially when $g \sim g_c$, the paramagnetic phase transition occurs such that all spins are now aligned in $-x$ direction. 
As $g \gg g_c$, the ground energy approaches to $E_g \simeq -gn$.
See Figure~\ref{gse} for the ground energy curves over the range of the magnetic coupling $0 \leq g \leq 2.5$ with different $n$. 

We also draw in Figure~\ref{re} the curves of various Renyi-$k$ entropies of the $n=12$ Ising ground state for $0 \leq g \leq 2.5$.
Figure~\ref{re2} contains the scatter plot of energy and Renyi-$2$ entanglement entropy over all the eigenstates of the $g=1$ Ising Hamiltonian with $n=12$ qubits.

\subsection{Long-Range Ising Model}
We add the long-range spin-spin interaction whose couplings decay with the distance. This Hamiltonian reads:
\begin{align}
H =   \sum_{i < j}\frac{1}{d(i,j)^\alpha}\,
\sigma_{z,i} \sigma_{z,j} + g\sum_{i=1}^n \sigma_{x,i}
\label{Ising2}
\end{align}
where $d(i,j)$ is the shortest distance between the $i$'th and $j$'th spins with the periodic boundary condition $i \sim i + n$. All the non-local interactions vanish in the limit $\alpha \rightarrow \infty$, thus the Hamiltonian \eqref{Ising2} reduces to \eqref{Ising1}.

As in the nearest-neighbor model, for any $\alpha \geq 0$, the long-range Ising model exhibits a transition between the anti-ferromagnetic and paramagnetic phases. 
Note that the ground state in the anti-ferromagnetic phase can have the entanglement entropy that grows with $n$, thus violating the area law.  Its scaling behavior in $n$ is logarithmic for $\alpha > 1$ and sub-logarithmic for $\alpha < 1$ \cite{longrange_ising}. Since the matrix product state ansatz can still closely approximate the ground state \cite{longrange_ising,mps_longrange_ising}, we expect the mild violation of the area law entanglement would not be a big obstacle of the gradient-based optimization even for large $n$. Several Renyi-$k$ entropies of the $n=12$ long-range Ising ground state are shown in Figure~\ref{re} as a function of  $0\leq \alpha \leq 20$.

Figure~\ref{gse} plots the $g=1$ ground energy as a function of the exponent $0\leq \alpha \leq 20$ for different system sizes $n$.
Since the long-range interactions are almost negligible for $\alpha \gtrsim 10$, the  curves converge to the energy \eqref{eq:ising_eg} of the nearest-neighbor Ising model  at $g=1$.  We also draw  the scatter plot of energy and Renyi-$2$ entanglement entropy in Figure~\ref{re2}, denoting every eigenstate of the $g=\alpha=1$ long-range Ising Hamiltonian with $n=12$ qubits.

\subsection{The SYK Model}

The Sachdev-Ye-Kitaev (SYK) model \cite{SYK,2015Kitaev} consists of random, long-range, all-to-all interactions of $n$ qubits, which correspond to the following random couplings of $q$ Majorana fermions:
\begin{equation}
H = (i)^{q/2} \sum_{1 \leq i_1<...<i_q \leq 2n} J_{i_1...i_q} \gamma_{i_1}...\gamma_{i_q} \ ,
\label{eq:syk_ham}
\end{equation}
where the Majorana fermions $\{\gamma_i\}_{1\leq i\leq 2n}$ satisfy the Clifford algebra  $\{\gamma_i,\gamma_j\} = \delta_{ij}$ and can be translated to the spin variables via the Jorgan-Wigner map. The coupling constants $J_{i_1...i_q}$ are randomly drawn from the Gaussian distribution with mean $0$ and variance ${(q-1)!}/{(2n)^{q-1}}$.
Much attention has been paid to the SYK${}_q$ model because it is exactly solvable and exhibits a chaotic dynamics for $q\geq 4$ at the same time \cite{2015Kitaev,Maldacena_2016}.

We focus on the SYK${}_4$ model and drop the subscript for brevity. 
Each random draw of the coupling constants $J_{i_1...i_4}$ from the Gaussian distribution defines a different instance of the SYK Hamiltonian. The SYK ground energy for 100 individual instances with $n$ qubits, or equivalently, $2n$ Majorana fermions are displayed in Figure~\ref{gse}.
Similarly, the entanglement entropies of 100  instances of the $n=12$ SYK ground state
 are visualized in Figure~\ref{re}, illustrating  the SYK ground state is much more highly-entangled than that of the Ising models. More generally, the  energy-entropy scatter plot of Figure~\ref{re2} denotes the full spectrum for an instance of the $n=12$ SYK Hamiltonian, exhibiting the  volume-law scaling of the entanglement entropies  \cite{Huang_2019}. Its energy gap is notably smaller than that of the Ising models, thus violating the assumption \cite{Hastings_2007} for the area-law entanglement of the ground state.

\begin{table}[b]
\centering
\begin{tabular}{@{}cccccccccccc@{}}
\toprule
Type &  $n$ & $v_k$ & $L_l$ & $r_{n,k}$ & $L_s$ & Type &  $n$ & $v_k$ & $L_l$ & $r_{n,k}$ & $L_s$\\ \midrule
\multirow{7}{*}{$S_\text{max}$} 
& $8$ & 0.8534 & 6 & 3.9985 & 9 & \multirow{7}{*}{$S_\text{min}$} & $8$ & 0.1724 & 9 & 2.2341 & 23\\
& $10$ & 0.9342 & 6 & 4.9983 & 12&& $10$ & 0.1611 & 14 & 3.1373 & 33\\
& $12$ & 0.8694 & 8 & 5.9974 & 18 && $12$ & 0.1651 & 21 & 4.0823 & 39\\
& $14$ & 0.8922 & 8 & 6.9965 & 23 && $14$ & 0.1739 & 23 & 5.0491 & 46\\
& $16$ & 0.8510 & 9 & 7.9956 & 32 && $16$ & 0.1715 & 35 & 6.0303 & 53\\
& $18$ & 0.8305 & 10 & 8.9947 & 35 && $18$ & 0.1726 & 39 & 7.0181 & 60\\
& $20$ & 0.8011 & 11 & 9.9931 & 47 && $20$ & 0.1686 & 49 & 8.0118 & 66\\\midrule
\multirow{7}{*}{$\mathcal{R}^4_A$} 
& $8$ & 0.2232 & 9 & 2.7084 & 29 & \multirow{7}{*}{$\mathcal{R}^6_A$} & $8$ & 0.2050 & 9 & 2.5798 & 27\\
& $10$ & 0.2077 & 12 & 3.7101 & 33 && $10$ & 0.1911 & 14 & 3.5708 & 33\\
& $12$ & 0.2134 & 19 & 4.7177 & 43  && $12$ & 0.1963 & 21 & 4.5764 & 41\\
& $14$ & 0.2225 & 19 & 5.7234 & 49 && $14$ & 0.2057 & 22 & 5.5823 & 49\\
& $16$ & 0.2199 & 29 & 6.7272 & 60 && $16$ & 0.2031 & 31 & 6.5866 & 60\\
& $18$ & 0.2212 & 32 & 7.7290 & 63 && $18$ & 0.2046 & 35 & 7.5887 & 63\\
& $20$ & 0.2156 & 41 & 8.7299 & 71 && $20$ & 0.1995 & 43 & 8.5899 & 71\\\bottomrule

\end{tabular}
\caption{Estimation of the entanglement velocity, late-time saturation value, early-time scale, and the late-time scale.}
\label{tbl:random-extra}
\end{table}

\begin{figure*}[h]
\centering
\subfloat[The circuit of Figure~\ref{fig:circuit_diagram}]{
    \includegraphics[height=5cm]{./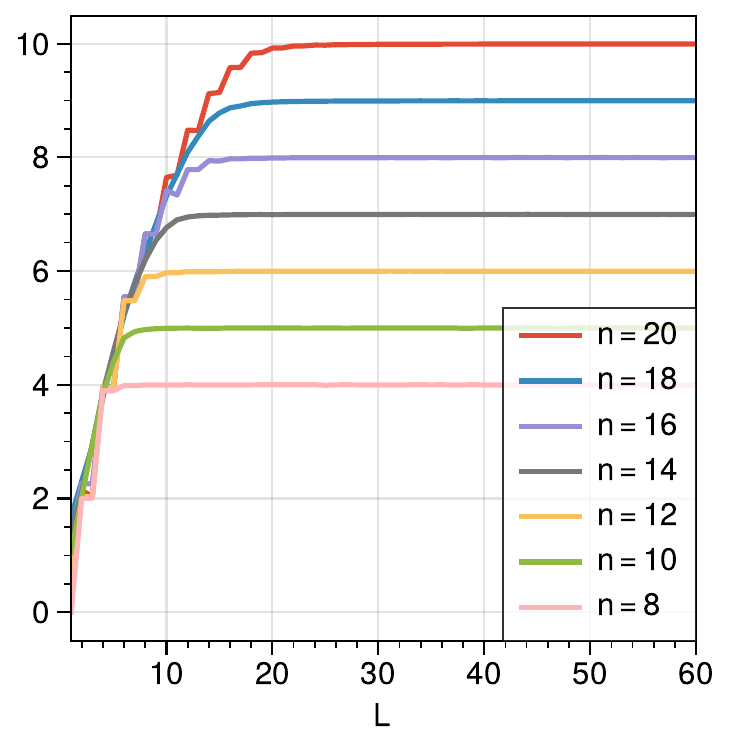}}
\subfloat[The $p=1/2$ circuit]{
    \includegraphics[height=5cm]{./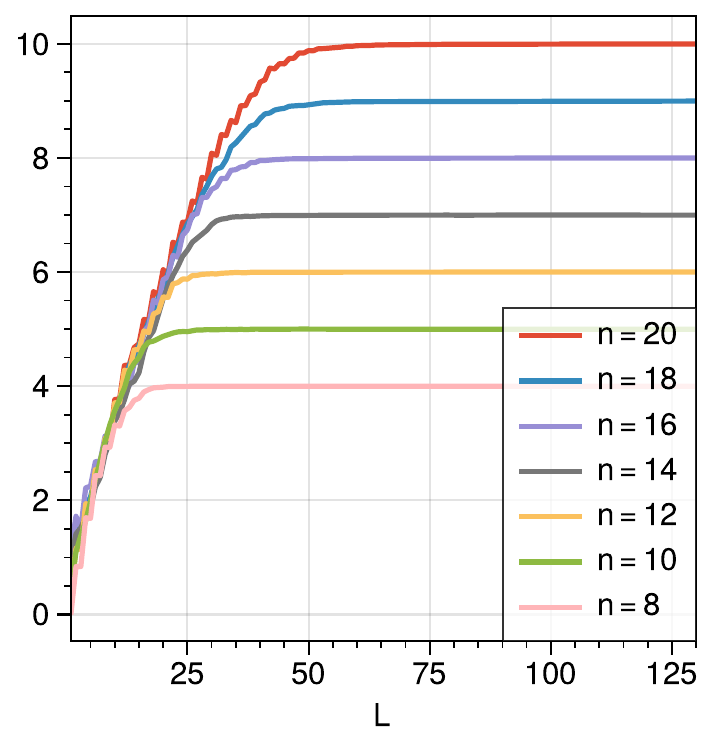}}
\subfloat[The restricted circuit]{
    \includegraphics[height=5cm]{./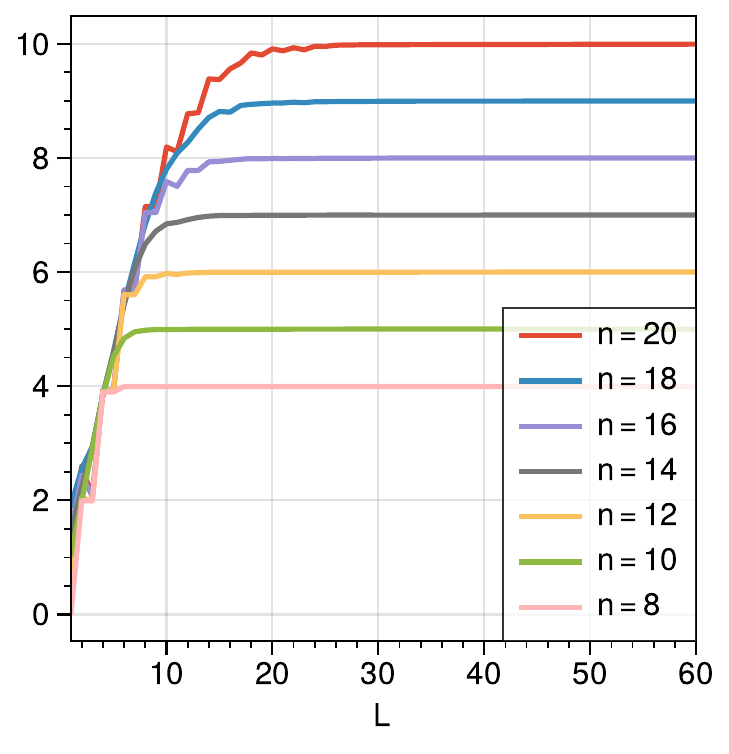}}
\caption{The max-entropy of the circuit reduced density matrix $\rho_A(\theta)$ as a function of the number of layers $L$.}
\label{fig:pmax}
\end{figure*}
\begin{figure*}[h]
\centering
\subfloat[The circuit of Figure~\ref{fig:circuit_diagram}]{
    \includegraphics[height=5cm]{./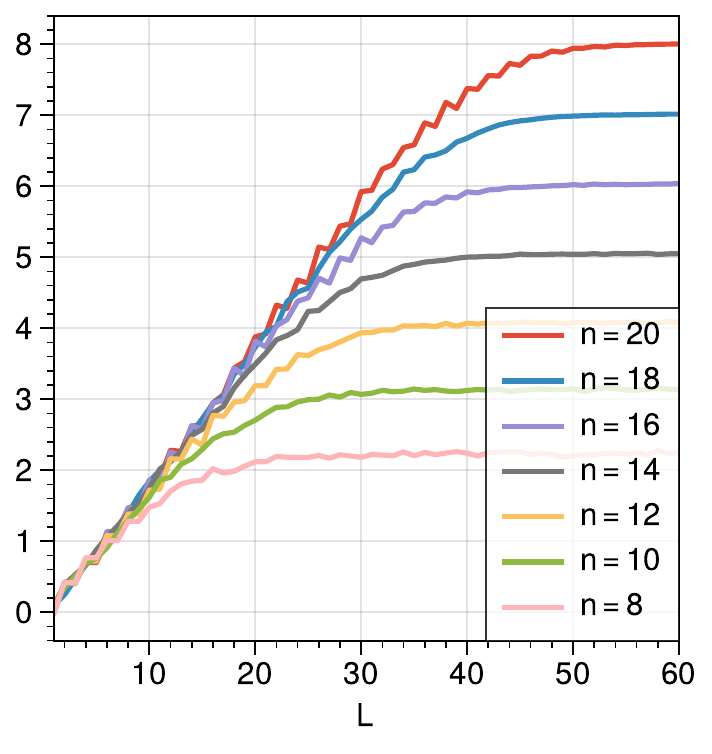}}
\subfloat[The $p=1/2$ circuit]{
    \includegraphics[height=5cm]{./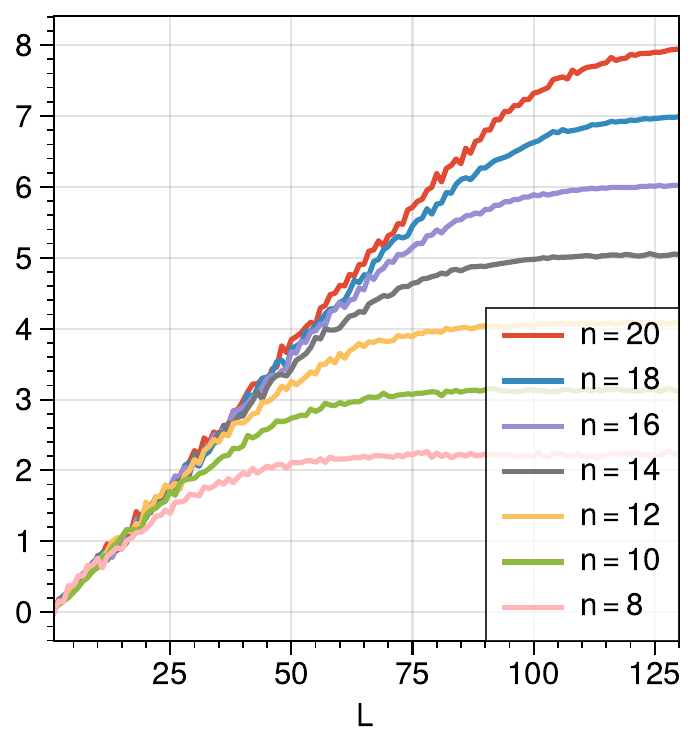}}
\subfloat[The restricted circuit]{
    \includegraphics[height=5cm]{./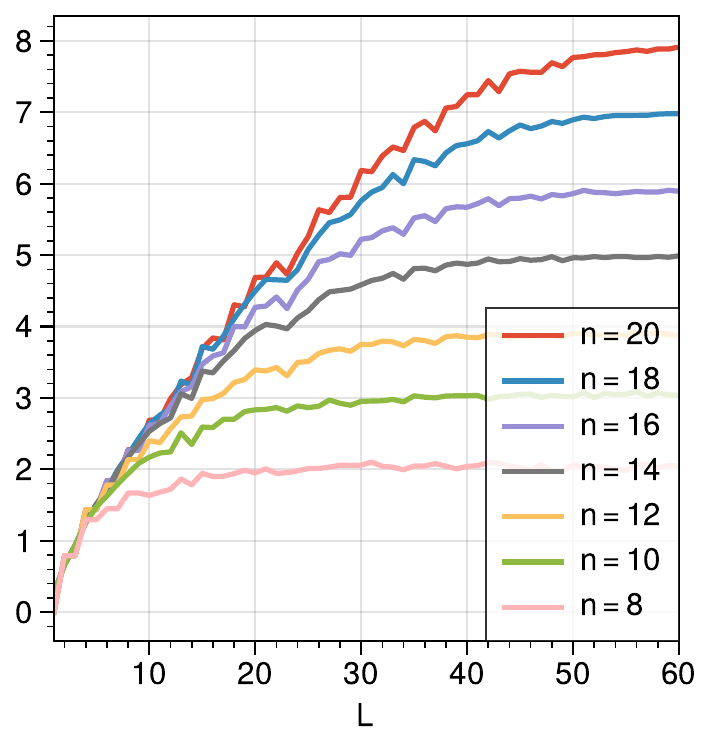}}
\caption{The min-entropy of the circuit reduced density matrix $\rho_A(\theta)$ as a function of the number of layers $L$.}
    \label{fig:pmin}
\end{figure*}
\begin{figure*}[h]
    \centering
\subfloat[The circuit of Figure~\ref{fig:circuit_diagram}]{
    \includegraphics[height=5cm]{./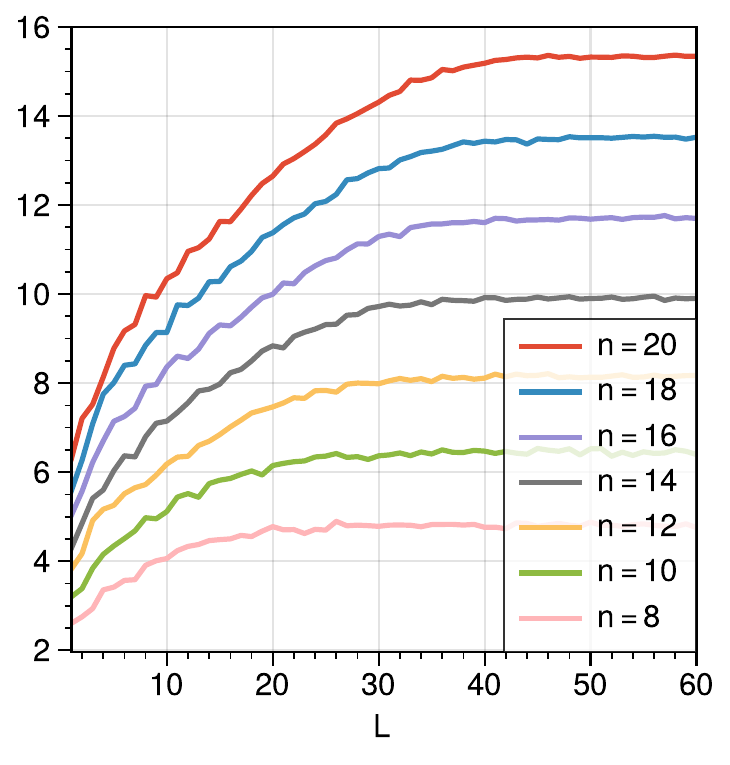}}
\subfloat[The $p=1/2$ circuit\label{fig:geometry-prob}]{
    \includegraphics[height=5cm]{./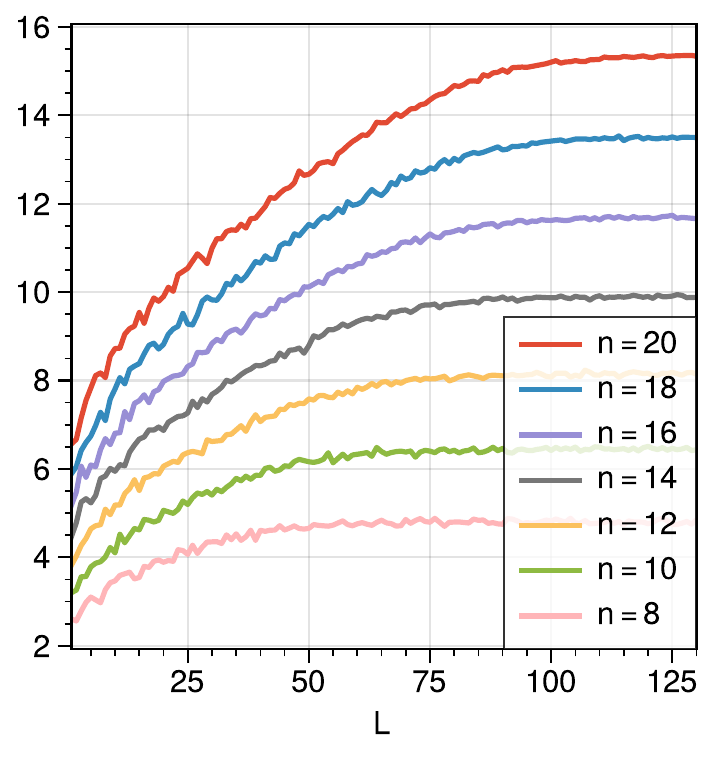}}
\subfloat[The restricted circuit\label{fig:geometry-rest}]{
    \includegraphics[height=5cm]{./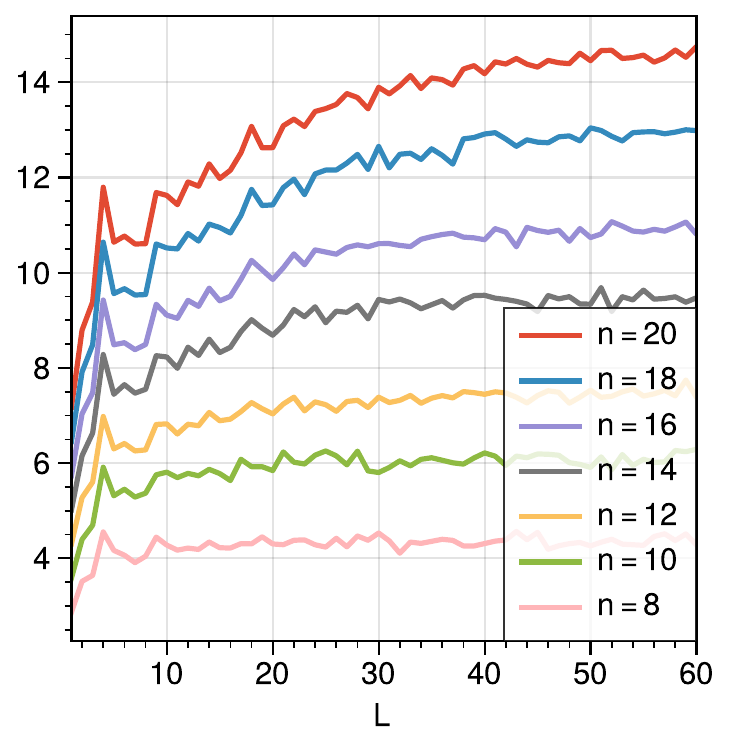}}
\caption{The geometric measure of the circuit reduced density matrix $\rho_A(\theta)$ as a function of the number of layers $L$.}
\label{fig:geometry-generic}
\end{figure*}

\section{Other Entanglement Measures}

\subsection{Max-Min entropies}
The min-entropy and max-entropy arise as two limiting cases of the Renyi-$k$ entropies, i.e., 
\begin{equation}
\begin{split}
 S_{max}(\rho_A) =& \textstyle\lim_{k\rightarrow 0} {\cal R}_A^k(\rho_A) = \log \left(\text{rank}~\rho_A\right)\\
  S_{min}(\rho_A) =& \textstyle\lim_{k\rightarrow \infty} {\cal R}_A^k(\rho_A) = -\log \left(\lambda_{max}(\rho_A)\right)
\end{split}
  \label{minmax}
\end{equation}
where $\lambda_{max}(\rho)$ is the largest eigenvalue of $\rho$. Given the three architectures in Figure~\ref{fig:circuit_diagram} and Section~\ref{sec:other}, their average max-min entanglement entropies of $50$ random circuit states are given in Figures~\ref{fig:pmax}~and~\ref{fig:pmin} as a function of $L$.

The max-entropy shows rapid initial growth caused by the circuit architecture in Figure~\ref{fig:circuit_diagram_left}, which increase the rank of the reduced density matrix $\rho_A$ every second layer until the rank saturates at the allowed maximum,  $2^{n_A}$. 
The non-negativeness and normalization of the reduced density matrix, i.e., $\lambda_i(\rho_A) \geq 0$ and $\sum_{i}\lambda_i(\rho_A) = 1$, then implies the change of the eigenvalue statistics from having one $\lambda$ being non-zero and having a value of $1$ to all $\lambda$'s being non-zero and having similar values around $2^{-n_A}$. Such decrease in the largest eigenvalue of $\rho_A$ is displayed in the min-entropy curve. We note this spectral change of $\rho_A$ can be a contributing factor for the emergence of the random matrix behavior of $\rho_A$, studied elsewhere \cite{KO2}.

\subsection{Geometric Measure}

Another way to measure the quantum entanglement of circuit  states is to study the geometric measure of entanglement, based on the overlap between the circuit state $|\psi(\theta)\rangle$ and its nearest product state \cite{Wei_2003}. It reads:
\begin{equation}
{\cal E}_g (|\psi(\theta)\rangle) = -\log\sup_{\alpha\in {\cal P}}~|\langle\alpha|\psi(\theta)\rangle|^2 \ ,
\label{geometric}
\end{equation}
where ${\cal P}$ is the set of qubit product states.
Figure~\ref{fig:geometry-generic} shows that the geometric measure of entanglement grows in the same pattern as of the entanglement entropy curve. However, note that the geometric measure is directly calculated from the full density matrix $\rho$, while the entanglement entropies are found from the reduced state $\rho_A$.

\providecommand{\href}[2]{#2}\begingroup\raggedright\endgroup


\end{document}